\documentclass[a4paper,11pt]{article}

\usepackage[mathlines]{lineno}
\usepackage{amsmath}
\usepackage{amssymb}
\usepackage{amsthm}
\usepackage{graphicx}
\usepackage{enumerate}
\usepackage{cite}
\usepackage{fullpage}
\usepackage[usenames]{color}
\usepackage[nodayofweek]{datetime}
\usepackage{hyperref}


\allowdisplaybreaks

\newcommand{\Real}{\ensuremath{\mathbb{R}}}

\newcommand{\intr}{\mathop{\mathrm{int}}}

\newcommand{\Ball}{\ensuremath{\mathbf{B}}}

\newcommand{\proj}{\overline}
\newcommand{\fproj}{\proj{f}}
\newcommand{\VD}{\ensuremath{\mathsf{VD}}}

\newtheoremstyle{mytheorem}{3pt}{3pt}{\slshape}{}{\bfseries}{}{.5em}{}
\theoremstyle{mytheorem}
\newtheorem{lemma}{Lemma}
\newtheorem{theorem}{Theorem}

\newtheorem{corollary}{Corollary}

\theoremstyle{definition}

\makeatletter
\newbox\ProofSym
\setbox\ProofSym=\hbox{%
\unitlength=0.18ex%
\begin{picture}(10,10)
\put(0,0){\framebox(9,9){}} \put(0,3){\framebox(6,6){}}
\end{picture}}
\renewenvironment{proof}[1][Proof.]{\O@proof{#1}}{\O@endproof}
\def\O@proof#1{\trivlist
   \@topsep\z@\@topsepadd\smallskipamount%
   \@ifstar{\item[]}{\item[\hskip\labelsep\it #1 ]}}
\def\O@endproof{\hfill\copy\ProofSym\linebreak[3mm]\endtrivlist}
\makeatother

\def\denseitems{
    \itemsep1pt plus1pt minus1pt
    \parsep0pt plus0pt
    \parskip0pt\topsep0pt}
%


\begin{document}


\title{Computing a Minimum-Width Cubic and Hypercubic Shell%
\thanks{%
This work was supported by Kyonggi University Research Grant 2018.
}
}

\author{%
Sang Won Bae\footnote{%
Division of Computer Science and Engineering, Kyonggi University, Suwon, Korea.
Email: \texttt{swbae@kgu.ac.kr} }
}

\date{%
\today\quad\currenttime
}

\maketitle

\begin{abstract}
In this paper, we study the problem of computing a minimum-width
axis-aligned cubic shell that encloses a given set of $n$ points
in a three-dimensional space.
A cubic shell is a closed volume between two concentric and face-parallel cubes.
Prior to this work, there was no known algorithm for this problem in the literature.
We present the first nontrivial algorithm
whose running time is $O(n \log^2 n)$.
Our approach easily extends to higher dimension, resulting in an
$O(n^{\lfloor d/2 \rfloor} \log^{d-1} n)$-time algorithm
for the hypercubic shell problem in $d\geq 3$ dimension.
\\

\noindent
\textbf{Keywords}: \textit{facility location,
geometric optimization,
exact algorithm,
cubic shell,
hypercubic shell,
minimum width}
\end{abstract}

\section{Introduction} \label{sec:intro}

The minimum-width circular annulus problem asks to find an annulus
of the minimum width, determined by two concentric circles,
that encloses a given set $P$ of $n$ points in the plane.
It has an application to the points-to-circle matching problem,
the minimum-regret facility location, and the roundness problem.
After early results on the circular annulus problem~\cite{w-nmppp-86,rz-epccmrsare-92},
the currently best algorithm that computes a minimum-width circular annulus
that encloses $n$ input points takes $O(n^{\frac{3}{2}+\epsilon})$ time~\cite{ast-apsgo-94,as-erasgop-96}.

Along with these applications and with natural theoretical interests,
the \emph{minimum-width annulus problem} and its variants
have recently been attained a lot of attention by many researchers,
resulting in various efficient algorithms.
Abellanas et al.~\cite{ahimpr-bfr-03} considered minimum-width rectangular annuli
that are axis-parallel, and presented two algorithms taking $O(n)$ or $O(n \log n)$ time:
one minimizes the width over rectangular annuli with arbitrary aspect ratio
and the other does over  rectangular annuli with a prescribed aspect ratio, respectively.
Gluchshenko et al.~\cite{ght-oafepramw-09} presented an $O(n \log n)$-time algorithm
that computes a minimum-width axis-parallel square annulus,
and proved a matching lower bound,
while the second algorithm by Abellanas et al.\@ can do the same in the same time bound.
If one considers rectangular or square annuli in arbitrary orientation,
the problem gets more difficult.
Mukherjee et al.~\cite{mmkd-mwra-13} presented an $O(n^2 \log n)$-time algorithm
that computes a minimum-width rectangular annulus in arbitrary orientation
and arbitrary aspect ratio.
The author~\cite{b-cmwsaao-18} recently showed that
a minimum-width square annulus in arbitrary orientation can be computed
in $O(n^3 \log n)$ time.

Despite of these recent progress and successful generalizations,
little is known about the high dimensional variants of the annulus problem.
For $d \geq 3$, the $d$-dimensional generalization of annuli is often referred to
\emph{shells} of a certain body of volume.
Mukherjee et al.~\cite{mmkd-mwra-13} showed that
a minimum-width shell of $d$-dimensional axis-parallel boxes (or hyper-rectangules)
can be computed in $O(dn)$ time.
For the minimum-width spherical or hyperspherical shells,
Chan~\cite{c-adwsecmwa-02} showed an $O(n^{\lfloor d/2 \rfloor + 1})$-time
exact algorithm, and some approximation algorithms are known~\cite{aahs-aamwas-00, c-adwsecmwa-02}.
However, to our best knowledge, there is no known result for
the cubic or hypercubic shell problem in the literature.
In fact, it is not difficult to apply Chan's approach and algorithm~\cite{c-adwsecmwa-02} for the hyperspherical shells to the hypercubic shells,
which implies $O(n^{\lfloor d/2 \rfloor + 1})$-time algorithm exact algorithm
that computes a minimum-width hypercubic shell enclosing $n$ points in $\Real^d$.
This in particular implies an $O(n^2)$-time algorithm that computes
a minimum-width cubic shell for $d=3$.

In this paper, we address the minimum-width hypercubic shell problem
in three or higher dimensions.
We first handle the three dimensional case,
and present a new algorithm that computes a minimum-width axis-aligned cubic shell
that encloses $n$ input points.
Our algorithm is based on a new approach which is different from that of Chan~\cite{c-adwsecmwa-02}, and takes $O(n \log^2 n)$ time in the worst case.
Next, we show that our approach can be extended to higher dimensions $d > 3$,
and present an algorithm that runs in
$O(n^{\lfloor d/2 \rfloor} \log^{d-1} n)$ expected time.

The rest of the paper is organized as follows:
We start with introducing some preliminaries in Section~\ref{sec:pre}.
After providing basic observations on hypercubes in $\Real^d$ for $d\geq 3$
in Section~\ref{sec:observations},
we discuss the case of $d = 3$ dimension in Section~\ref{sec:cubic_shell}
and present our algorithm that computes a minimum-width axis-aligned
cubic shell that encloses $n$ input points in $\Real^3$.
We then extend our approach and algorithm to higher dimensions in
Section~\ref{sec:hypercubic_shell}.
We finally concludes our paper with Section~\ref{sec:conclusions}.

\section{Preliminaries} \label{sec:pre}

In this section, we introduce some preliminaries for our discussions.
We consider the $d$-dimensional space $\Real^d$ for $d \geq 1$
with a standard coordinate system of $d$ axes,
namely, the $x_1$-axis, $x_2$-axis, \ldots, and $x_d$-axis.
For any point $p \in \Real^d$, its coordinates will be referred to
$x_1(p), x_2(p), \ldots, x_d(p)$ in this order,
so $p = (x_1(p), x_2(p), \ldots, x_d(p))$.
The \emph{$L_\infty$ norm} of $p \in \Real^d$, denoted here by $\| p\|$, is
defined to be
 \[ \|p \| := \sum_{i=1, \ldots, d} |x_i(p)|.\]
For any two points $p, q \in \Real^d$,
the \emph{$L_\infty$ distance} between $p$ and $q$ is $\| p - q\|$.
The \emph{$L_\infty$-ball centered at $p$ with radius $r \in \Real$},
denoted by $\Ball(p, r)$, is
the set of points $q \in \Real^d$ such that $\| q - p\| \leq r$.
A $d$-dimensional \emph{axis-aligned hypercube} is
a synonym to an $L_\infty$-ball in $\Real^d$.
In particular, an axis-aligned hypercube is called an \emph{interval} if $d=1$;
an axis-aligned \emph{square} if $d=2$; and an axis-aligned \emph{cube} if $d=3$.
The side length of a hypercube is twice its radius.
Throughout this paper, we only discuss axis-aligned hypercubes,
so we shall mean axis-aligned hypercubes without the adjective ``axis-aligned.''

\begin{figure}[tb]
\begin{center}
\includegraphics[width=.7\textwidth]{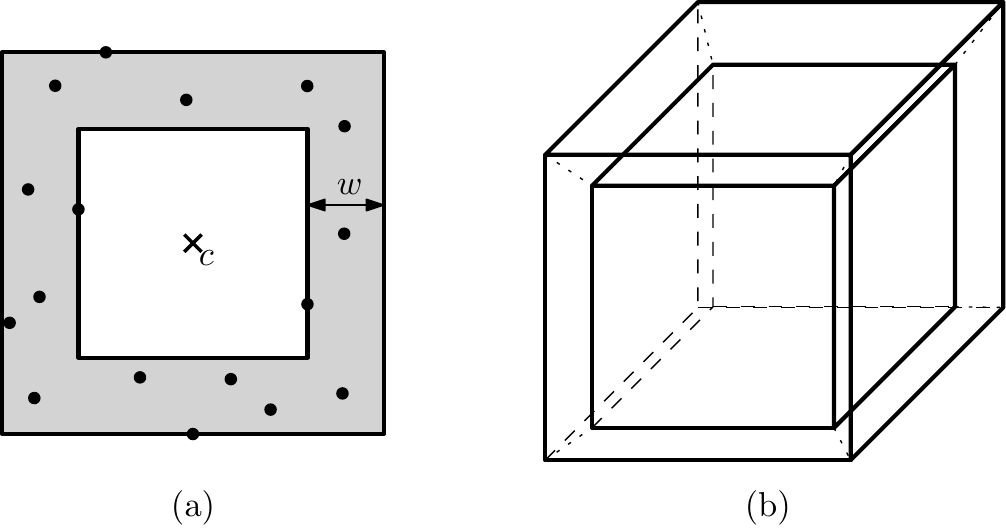}
\end{center}
\caption{(a) A minimum-width square annulus $A$ (shaded area) enclosing a set of $17$ points in the plane whose width is $w$ and center is $c$. Its outer and inner squares are drawn in thick black lines. (b) A cubic shell in $\Real^3$ whose outer and inner cubes are drawn in thick black lines.
 }
\label{fig:cubicshell}
\end{figure}

Two hypercubes are called \emph{concentric} if they share a common center.
A \emph{hypercubic shell} $A$ in $\Real^d$ is
the closed volume between two concentric hypercubes,
called the \emph{outer hypercube} $B$ and the \emph{inner hypercube} $B'$ of $A$,
respectively, where the radius of $B$ is at least that of $B'$.
Specifically, $A = B \setminus \intr B'$,
where $\intr B'$ denotes the interior of $B'$.
The \emph{width} of a hypercubic shell is the difference
between the radii of its inner and outer hypercubes.
A hypercubic shell is also called a \emph{square annulus},
in particular, for $d=2$, and a \emph{cubic shell} for $d=3$.
See \figurename~\ref{fig:cubicshell} for an illustration
of a square annulus in $\Real^2$ and a cubic shell in $\Real^3$.

The main purpose of this paper is to solve the
\emph{minimum-width hypercubic shell problem},
in which we are given a set $P$ of $n$ points in $\Real^d$ for $d \geq 1$
and want to find a hypercubic shell of minimum width that encloses $P$.
The problem is also called the \emph{minimum-width square annulus problem}
for $d=2$ and the \emph{minimum-width cubic shell problem} for $d=3$.

As introduced above, the minimum-width square annulus problem
can be solved in $O(n \log n)$ time in the worst case,
and its matching lower bound is also known~\cite{ght-oafepramw-09}.
The case of $d=1$ would be less interesting, while it is worth mentioning
for completeness.
For $d=1$, the problem is to compute two intervals of equal length that contain
$n$ given numbers $P \subset \Real$,
and it can be easily done in $O(n)$ time.
\begin{theorem} \label{thm:1d-2d}
 The minimum-width hypercubic problem can be solved in $\Theta(n)$ time for $d=1$
 and $\Theta(n\log n)$ time for $d=2$, both in the worst case.
\end{theorem}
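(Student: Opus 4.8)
The plan is to handle the two dimensions separately; since the $d=2$ bounds are already classical, the substance lies entirely in the $d=1$ upper bound. For $d=1$, a shell centered at $c$ with outer radius $R$ and inner radius $r$ encloses $P$ precisely when $r \le |p-c| \le R$ for every $p \in P$, so for a fixed center $c$ the smallest achievable width is $W(c) := \max_{p\in P}|p-c| - \min_{p\in P}|p-c|$, and the problem reduces to minimizing $W$ over $c \in \Real$. Put $a := \min P$ and $b := \max P$. For $c < a$ or $c > b$ one checks $W(c) = b-a$, so it suffices to consider $c \in [a,b]$, where $\max_{p}|p-c| = \max(c-a,\, b-c)$ and $\min_{p}|p-c| = \mathrm{dist}(c,P)$, the distance from $c$ to its nearest input point. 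On $[a,(a+b)/2]$ the first term equals $b-c$ (slope $-1$) while the second is piecewise linear with slopes in $\{-1,+1\}$, so $W$ has slope in $\{-2,0\}$ and is non-increasing there; symmetrically $W$ is non-decreasing on $[(a+b)/2,b]$. Hence $W$ attains its minimum at the midpoint $c^* := (a+b)/2$, with optimal width $(b-a)/2 - \mathrm{dist}(c^*,P)$; this value is realized by a valid shell because $\mathrm{dist}(c^*,P) \le |c^*-a| = (b-a)/2$ and every input point lies in $[a,b]$. The algorithm is then two linear scans — one to compute $a$ and $b$, one to find the point nearest to $c^*$ — giving $O(n)$ time, and no sorting is needed; the matching $\Omega(n)$ bound is immediate, since a correct algorithm must inspect every input point.

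For $d=2$ I would simply invoke prior work. Gluchshenko, Hamacher and Tamir~\cite{ght-oafepramw-09} give an $O(n\log n)$-time algorithm for the minimum-width axis-parallel square annulus together with a matching $\Omega(n\log n)$ lower bound in the algebraic decision-tree model, and the second algorithm of Abellanas et al.~\cite{ahimpr-bfr-03} also runs in $O(n\log n)$ time. (This is also the value of our general running-time bound $O(n^{\lfloor d/2\rfloor}\log^{d-1}n)$ at $d=2$.) Combining these facts with the $d=1$ analysis above establishes the theorem.

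The only genuinely subtle point is the $d=1$ assertion that the optimum occurs exactly at the midpoint of the bounding interval. Note that $W(c)$ is neither convex nor concave: it is a convex function minus the nearest-point distance function, and the latter typically has many local maxima (one per gap between consecutive points), so it is not a priori clear where the minimizer sits. What rescues the argument is just the slope bookkeeping — on each half of $[a,b]$ one of the two terms is monotone with the favorable sign, and the other, having slope of magnitude $1$, cannot reverse it. Once this monotonicity is observed, both the algorithm and its correctness are routine, so I expect this observation to be the main (and small) obstacle.
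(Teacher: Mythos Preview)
Your proposal is correct and follows the same outline as the paper: cite \cite{ght-oafepramw-09} (and \cite{ahimpr-bfr-03}) for the $\Theta(n\log n)$ bound when $d=2$, and handle $d=1$ directly in linear time. In fact you go further than the paper does, since the paper merely asserts that the one-dimensional case ``can be easily done in $O(n)$ time'' without giving an argument; your midpoint analysis---showing $W(c)$ is non-increasing on $[a,(a+b)/2]$ and non-decreasing on $[(a+b)/2,b]$ via the slope bookkeeping---is a clean and correct way to justify that assertion and to locate the optimal center explicitly.
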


In the following, we consider the problem for $d = 3$ and higher.
For the purpose, we need a basic geometry of cubes and hypercubes
enclosing the given set $P$ of points.
Throughout the paper, we shall say that
a facet of a hypercube or a box contains a point $p\in \Real^d$
if the facet or any face of less dimension incident to it contains the point $p$.
\begin{lemma} \label{lem:smallest_enclosing_cube}
 Let $P \subset \Real^d$ for $d \geq 2$ be a set of points, and $B$ be a hypercube
 that encloses $P$.
 Then, $B$ is a smallest enclosing hypercube for $P$ if and only if
 there are two parallel facets of $B$ such that
 each of them contains a point of $P$.
\end{lemma}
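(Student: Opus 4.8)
The plan is to translate the geometric statement into coordinates and then prove the two implications separately. Write $c = (c_1,\dots,c_d)$ for the center and $r$ for the radius of $B$, so $B = \Ball(c,r) = \prod_{i=1}^d [c_i - r,\, c_i + r]$. For each axis $i$, call the facet of $B$ lying in the hyperplane $x_i = c_i - r$ the \emph{lower $x_i$-facet} and the one in $x_i = c_i + r$ the \emph{upper $x_i$-facet}; these are precisely the $d$ pairs of parallel facets of $B$. Since $B$ encloses $P$, every $p \in P$ satisfies $c_i - r \le x_i(p) \le c_i + r$ for all $i$. Unwinding the paper's convention for ``a facet contains a point,'' a point $p \in P$ lies on the lower (resp.\ upper) $x_i$-facet exactly when $x_i(p) = c_i - r$ (resp.\ $x_i(p) = c_i + r$); hence the lower $x_i$-facet contains a point of $P$ iff $\min_{p\in P} x_i(p) = c_i - r$, and the upper $x_i$-facet contains a point of $P$ iff $\max_{p\in P} x_i(p) = c_i + r$. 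The whole argument runs through this reformulation.

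For the ``if'' direction, suppose a pair of parallel facets, say the lower and upper $x_i$-facets, each contains a point of $P$. Then $\max_{p\in P} x_i(p) - \min_{p\in P} x_i(p) = (c_i + r) - (c_i - r) = 2r$, so the orthogonal projection of $P$ onto the $x_i$-axis is an interval of length $2r$. Any hypercube $\Ball(c',r')$ enclosing $P$ has extent exactly $2r'$ along the $x_i$-axis and must cover this interval, forcing $r' \ge r$. Hence no enclosing hypercube is smaller than $B$, i.e., $B$ is a smallest enclosing hypercube.

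For the ``only if'' direction I would argue by contraposition: assume that on every axis $i$ at least one of the two $x_i$-facets contains no point of $P$, and exhibit an enclosing hypercube strictly smaller than $B$. For each $i$ put $s_i := 2r - \bigl(\max_{p\in P} x_i(p) - \min_{p\in P} x_i(p)\bigr) \ge 0$. By the assumption, for each $i$ either $\min_{p\in P} x_i(p) > c_i - r$ or $\max_{p\in P} x_i(p) < c_i + r$, and in either case one checks $s_i > 0$. Let $s := \min_i s_i > 0$ and $r' := r - s/2 < r$. For each $i$, the interval $[\,\max_{p\in P} x_i(p) - r',\ \min_{p\in P} x_i(p) + r'\,]$ is nonempty, because $2r' = 2r - s \ge 2r - s_i = \max_{p\in P} x_i(p) - \min_{p\in P} x_i(p)$; picking $c_i'$ in this interval for every $i$ produces a hypercube $\Ball(c',r')$ of radius $r' < r$ that still encloses $P$, contradicting the minimality of $B$. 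Therefore some axis $i$ has both $x_i$-facets meeting $P$, and these are the two parallel facets sought.

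I do not expect a serious obstacle here; the only points that need care are the faithful reading of the ``contains a point'' convention in the first step (so that a single coordinate condition captures ``lies on the facet''), and the slack bookkeeping in the last step (verifying that the shrunken box can legitimately be centered so as to still contain $P$). The degenerate case $r = 0$ — where $P \subseteq B$ forces $P = \{c\}$ — is immediate, since the convention then puts $c$ on every facet of $B$, so the facet condition holds trivially and $B$ is visibly a smallest enclosing hypercube.
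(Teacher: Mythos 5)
Your proof is correct and follows essentially the same approach as the paper's. The ``if'' direction matches the paper's argument (reformulated in coordinates rather than via the $L_\infty$ distance between the two facet points), and for the ``only if'' direction you spell out the explicit shrink-and-recenter construction of a strictly smaller enclosing hypercube using the per-axis slack $s_i$, a step the paper simply dismisses as obvious.
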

\begin{proof}
If there is no pair of parallel facets of $B$, each of which contains a point of $P$, then $B$ is certainly not of the smallest size.
Conversely, suppose that there are two parallel facets of $B$
containing a point of $P$ on each.
Let $p, p' \in P$ be these two points on the parallel facets of $B$.
Then the radius of $B$ is determined by $p$ and $p'$, $\|p-p'\|/2$.
On the other hand, any hypercube $B'$ enclosing $P$ should includes
these two points $p$ and $p'$,
so the radius of such hypercube $B'$ cannot be smaller than $\|p - p'\|/2$.
Hence, $B$ is a smallest enclosing hypercube for $P$.
\end{proof}

\section{Basic Observations on Hypercubic Shells} \label{sec:observations}
In this section, we observe some general properties of hypercubic shells
enclosing a set of points.
Let $d \geq 3$ be an integer, being a constant, and
$P$ be a set of $n$ points in $\Real^d$.

\begin{figure}[tb]
\begin{center}
\includegraphics[width=.7\textwidth]{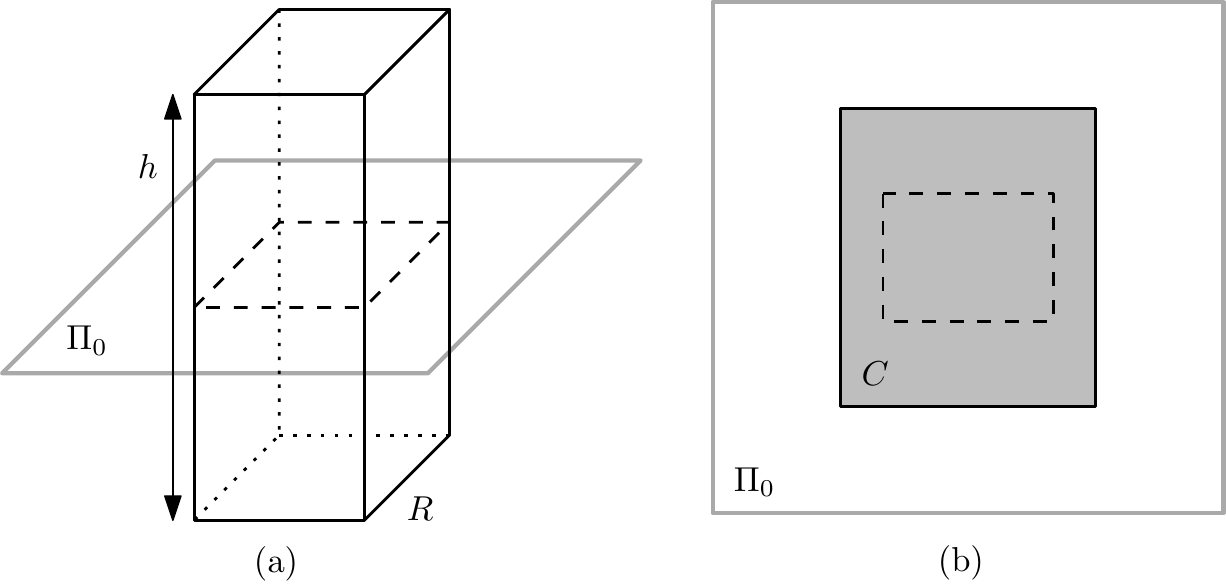}
\end{center}
\caption{Illustration for $d=3$ for (a) the smallest enclosing box $R$ for $P$ and the hyperplane $\Pi_0$ that halves $R$, and (b) the set $C$ of centers of all smallest hypercubes that enclose $P$, being a rectangle for $d=3$. Here, the dashed rectangle is the intersection $R\cap \Pi_0$.
 }
\label{fig:C}
\end{figure}

Let $R$ be the smallest axis-aligned box, or hyperrectangle, that encloses $P$,
that is,
\[R = [\min_{p\in P} x_1(p), \max_{p\in P} x_1(p)] \times [\min_{p\in P} x_2(p), \max_{p\in P} x_2(p)] \times \cdots \times [\min_{p\in P} x_d(p), \max_{p\in P} x_d(p)].\]
Let $h$ be the length of the longest sides of $R$, and
without loss of generality, we assume that
the sides of $R$ with length $h$ are parallel to the $x_d$-axis.
Consider the hyperplane $\Pi_0$ orthogonal to the $x_d$-axis that halves $R$.
Again, we assume that $\Pi_0$ contains the origin $o=(0,0, \ldots, 0)$,
i.e., $\Pi_0$ coincides the $x_1x_2\cdots x_{d-1}$-hyperplane,
which can be easily achieved by a translation of $P$ along the $x_d$-axis.
See \figurename~\ref{fig:C}(a) for an illustration for $d=3$.

We then consider any smallest axis-aligned hypercube $B$ that encloses $P$.
Let $C$ be the set of centers of all such smallest hypercubes that enclose $P$.
\begin{lemma} \label{lem:C}
 We have $C \subset \Pi_0$ and $C$ forms a $(d-1)$-dimensional box in $\Pi_0$.
 Therefore, a hypercube $B$ is a smallest hypercube enclosing $P$
 if and only if $B = \Ball(c, h/2)$ for some $c\in C$.
\end{lemma}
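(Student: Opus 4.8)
The plan is to use the normalization together with Lemma~\ref{lem:smallest_enclosing_cube} to pin down both the radius and the set of admissible centers of the smallest enclosing hypercubes. First I would record what the normalization gives: since the sides of $R$ parallel to the $x_d$-axis have length $h$ and $\Pi_0$ halves $R$ (and passes through the origin), we have $\min_{p\in P} x_d(p) = -h/2$ and $\max_{p\in P} x_d(p) = h/2$; and for each $i < d$, writing $a_i := \min_{p\in P}x_i(p)$ and $b_i := \max_{p\in P}x_i(p)$, the maximality of $h$ among the side lengths of $R$ yields $b_i - a_i \le h$.

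Next I would argue that every smallest enclosing hypercube has radius exactly $h/2$. Any hypercube $\Ball(c,r)$ that encloses $P$ must cover $[-h/2, h/2]$ in the $x_d$-coordinate, hence $2r \ge h$. Conversely, if $\Ball(c,r)$ is a \emph{smallest} enclosing hypercube, Lemma~\ref{lem:smallest_enclosing_cube} provides two parallel facets of it, say orthogonal to the $x_j$-axis, each containing a point of $P$; these two points have $x_j$-coordinates $x_j(c) - r$ and $x_j(c) + r$ and both lie in $P$, so $2r \le b_j - a_j \le h$. Together these force $r = h/2$.

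Then I would characterize the centers. A hypercube $\Ball(c, h/2)$ encloses $P$ if and only if, for every coordinate $i$, the interval $[x_i(c) - h/2,\, x_i(c) + h/2]$ contains the $x_i$-extent of $P$. For $i = d$ this says $[x_d(c) - h/2,\, x_d(c) + h/2] \supseteq [-h/2, h/2]$, which forces $x_d(c) = 0$, i.e.\@ $c \in \Pi_0$. For each $i < d$ it says $x_i(c) \in I_i := [b_i - h/2,\, a_i + h/2]$, and $I_i$ is nonempty precisely because $b_i - a_i \le h$. Conversely, whenever $x_d(c) = 0$ and $x_i(c) \in I_i$ for all $i < d$, the hypercube $\Ball(c, h/2)$ encloses $P$ and its two facets orthogonal to the $x_d$-axis, lying in $\{x_d = -h/2\}$ and $\{x_d = h/2\}$, each contain a point of $P$, so by Lemma~\ref{lem:smallest_enclosing_cube} it is a smallest enclosing hypercube. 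Combining the last two paragraphs, the smallest enclosing hypercubes are exactly the $\Ball(c, h/2)$ with $c \in I_1 \times \cdots \times I_{d-1} \times \{0\}$; hence $C = I_1 \times \cdots \times I_{d-1} \times \{0\}$ is an axis-parallel box contained in $\Pi_0$ (of dimension $d-1$, or lower if some side of $R$ other than those parallel to the $x_d$-axis also has length $h$), and the stated equivalence follows at once.

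The routine part is the interval bookkeeping in the containment conditions. The only points needing a little care are that the radius $h/2$ is forced from two directions---necessity from the $x_d$-extent of $P$ and the upper bound from Lemma~\ref{lem:smallest_enclosing_cube}---and that each $I_i$ is nonempty, which is exactly where $h$ being the length of a \emph{longest} side of $R$ enters. I do not anticipate a genuine obstacle beyond keeping these points straight.
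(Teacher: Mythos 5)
Your proposal is correct and follows essentially the same path as the paper's proof: fix the radius at $h/2$, observe that the $x_d$-extent of $P$ forces the center onto $\Pi_0$, and then reduce ``$\Ball(c,h/2)$ encloses $P$'' to coordinate-wise interval containments (equivalently, to enclosing $R$), which yields a box of admissible centers. You merely spell out details the paper leaves implicit---notably the two inequalities $2r\ge h$ and $2r\le h$ and the nonemptiness of each $I_i$ from the maximality of $h$---and your appeal to Lemma~\ref{lem:smallest_enclosing_cube} for the upper bound on $r$ is a slight detour (the existence of an enclosing hypercube of side $h$ already gives it), but the argument is sound.
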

\begin{proof}
Let $B$ is any smallest axis-aligned hypercube $B$ that encloses $P$.
Since the side length of $B$ is $h$,
its center should lie on $\Pi_0$.
Hence, the set $C$ of centers of all smallest hypercubes that enclose $P$
is a subset of $\Pi_0$.
Furthermore, note that a hypercube $B$ encloses $P$ if and only if
$B$ encloses the smallest enclosing box $R$ for $P$.
This implies that $C$ forms a $(d-1)$-dimensional box in $\Pi_0$,
which may be degenerate to a box of lower dimension.
\end{proof}
In particular, if $d=3$, then $\Pi_0$ is the $x_1x_2$-plane,
and $C$ forms a rectangle in $\Pi_0$.
See \figurename~\ref{fig:C} for an illustration.

If we fix a center $c \in \Real^d$, then the minimum-width cubic shell $A^*(c)$
enclosing $P$ is uniquely determined as follows:
Since the outer cube $B$ of $A^*(c)$ should enclose all points of $P$,
we have $B = \Ball(c, r)$ with $r = \max_{p\in P} \|p-c\|$;
while the interior of the inner cube $B'$ of $A(c)$ should avoid $P$,
we have $B' = \Ball(c, r')$ with $r' = \min_{p \in P} \|p-c\|$.

For $d=2$, Gluchshenko et al.~\cite{ght-oafepramw-09} proved that
there always exists a minimum-width square annulus enclosing $P$
such that its center lies in $C$.
Here, we generalize this observation into higher dimensions.
\begin{lemma} \label{lem:C_centered}
 There exists a minimum-width hypercubic shell enclosing $P$
 centered at some $c \in C$.
\end{lemma}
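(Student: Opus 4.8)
The plan is to start from an arbitrary minimum-width hypercubic shell $A$ enclosing $P$, with center $c$ and width $w$, and to continuously move $c$ toward the box $C$ of Lemma~\ref{lem:C} without increasing the width, until the center lands in $C$. Recall that once the center is fixed at a point $q$, the optimal shell $A^*(q)$ has outer radius $r(q) = \max_{p \in P}\|p - q\|$ and inner radius $r'(q) = \min_{p \in P}\|p - q\|$, so its width is $w(q) = r(q) - r'(q)$; thus it suffices to exhibit a path from $c$ to some point of $C$ along which $w$ is nonincreasing. The natural candidate path is the one that moves $c$ along the $x_d$-axis so as to bring it onto $\Pi_0$ first, and then argues that translating inside $\Pi_0$ toward $C$ can only help; but a cleaner route is to move $c$ directly toward the box $C$ in a single motion and track $r$ and $r'$ separately.

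First I would control the outer radius. By Lemma~\ref{lem:C}, a point $q$ lies in $C$ exactly when the smallest enclosing hypercube of $P$ is $\Ball(q, h/2)$, equivalently when $\Ball(q, h/2) \supseteq R$, where $R$ is the smallest enclosing box. For a center $c$ with $r(c) \geq h/2$ (which always holds, since the outer cube must contain $R$ whose longest side is $h$), consider the nearest point $\pi(c)$ to $c$ in the box $C$; I claim that moving from $c$ straight toward $\pi(c)$ does not increase $r$. This is because $r(q) = \max_p \|p - q\|$ is the support/width function of $\conv(P)$ (equivalently of $R$) in the $L_\infty$ metric, and $C$ is precisely the set of centers realizing the minimum possible value $h/2$ of this function; convexity of $q \mapsto \|p-q\|$ for each $p$, together with the fact that $C$ is the arg-min region of the max, gives that $r$ is nonincreasing along the segment from any point to its projection onto $C$. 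I would verify this coordinatewise: $\|p - q\|_\infty = \max_i |x_i(p) - x_i(q)|$, and the constraint "$\Ball(q,h/2)\supseteq R$" decomposes into independent interval constraints on each coordinate $x_i(q)$, so $\pi(c)$ is obtained by clamping each coordinate of $c$ into its admissible interval, and clamping never increases any $|x_i(p) - x_i(q)|$ beyond what is needed.

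Next I would handle the inner radius along the same segment. Here the key point is that $r'(q) = \min_{p\in P}\|p - q\|$ is itself the lower envelope of convex functions, hence in general neither convex nor monotone — so this is the step I expect to be the main obstacle. I would resolve it by a local-move argument rather than a global one: suppose $c \notin C$; then at least one coordinate, say along the $x_d$-axis (or whichever coordinate is "free" because the longest side of $R$ forces slack), can be perturbed in at least one of the two directions without increasing $r$, by the coordinatewise analysis above. Moving in that direction, $r'(q) = \min_p \|p-q\|$ changes piecewise-linearly with slope in $\{-1, 0, +1\}$ between breakpoints; if the slope is $\geq 0$ we get $w$ nonincreasing immediately, and if the slope is $< 0$ we move the other way (still feasible for $r$, provided both directions keep $r$ fixed, which happens precisely when $c$ is not yet pinned in that coordinate). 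Iterating, we either reach $C$ or reach a point where every admissible coordinate move would strictly increase $r$ — but at such a point all $d$ coordinates are pinned, which by Lemma~\ref{lem:C} means the center is already in $C$, a contradiction. To make this fully rigorous I would instead argue more directly: move $c$ to $\pi(c) \in C$; by the previous paragraph $r$ does not increase, and I would show $r'$ does not increase either because each clamped coordinate move can only bring $q$ closer, in the $x_i$-coordinate, to every point $p\in P$ whose $x_i$-coordinate lies in $R$'s $i$-th side — and all points of $P$ do — so $|x_i(p)-x_i(q)|$ is nonincreasing for every $p$ and every $i$, whence $\|p-q\|_\infty$ is nonincreasing for every $p$, and therefore both $\max_p$ and $\min_p$ of this quantity are nonincreasing. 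Thus $w(\pi(c)) \leq w(c)$, and since $\pi(c)\in C$, this proves the lemma. The one subtlety to check carefully is the claim "$|x_i(p) - x_i(q)|$ is nonincreasing under clamping $x_i(q)$ into its admissible interval for every $p \in P$": this follows because the admissible interval for $x_i(q)$ is $[\max_p x_i(p) - h/2,\ \min_p x_i(p) + h/2]$, which contains the midpoint of $[\min_p x_i(p), \max_p x_i(p)]$, so clamping moves $x_i(q)$ toward that midpoint and hence toward (or at worst not away from) every $x_i(p)$.
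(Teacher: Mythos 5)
Your overall goal --- move $c$ to its projection $\pi(c)$ onto $C$ and show the width cannot increase --- is sound (and the inequality $w(\pi(c))\leq w(c)$ is in fact true), but the argument you give for it has two genuine gaps. First, the claim that ``$|x_i(p)-x_i(q)|$ is nonincreasing for every $p\in P$ and every $i$ under clamping'' is false. Consider the coordinate $x_d$ along which $R$ has its longest side (length $h$): there the admissible interval $[\max_p x_d(p) - h/2,\ \min_p x_d(p) + h/2]$ degenerates to the single midpoint $m_d$, so clamping an off-center $x_d(c) = m_d+\varepsilon$ (with $0<\varepsilon<h/2$) back to $m_d$ moves $q$ \emph{away} from every $p$ with $x_d(p)$ near $\max_p x_d(p)$, raising that coordinate distance from $h/2-\varepsilon$ to $h/2$. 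The same failure occurs in any coordinate whose side length $\ell_i$ exceeds $h/2$, because then the clamp target lies strictly inside $[\min_p x_i(p),\max_p x_i(p)]$; ``moving toward the midpoint'' of the admissible interval is not the same as moving toward every $x_i(p)$.

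Second, and independently, the final inference is a non sequitur: even granting that both $r(q)=\max_p\|p-q\|$ and $r'(q)=\min_p\|p-q\|$ are nonincreasing along the path, one cannot conclude that $w=r-r'$ is nonincreasing --- if $r'$ drops more than $r$ does, the width grows. What you actually need is that $r$ decreases by at least as much as $r'$ does. That stronger statement is provable for your path: $r(c)-r(\pi(c))\geq\|c-\pi(c)\|$ because $r(c)\geq h/2+\delta_i$ for the clamp amount $\delta_i$ in each coordinate while $r(\pi(c))=h/2$, and $r'(c)-r'(\pi(c))\leq\|c-\pi(c)\|$ since $r'$ is $1$-Lipschitz; subtracting gives $w(\pi(c))\leq w(c)$. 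But that is not the argument you wrote. The paper's proof avoids the difficulty altogether by not tracking the \emph{optimal} shell at each center: if $c^*\notin C$, exactly $d$ pairwise non-parallel facets of the outer cube touch $P$ and share a vertex $q$; sliding the center toward $q$ while \emph{shrinking both radii} by $\|c-c^*\|$ keeps the width exactly constant and keeps the shell feasible until a $(d+1)$-st facet touches a point, at which moment the outer cube is a smallest enclosing hypercube by Lemma~\ref{lem:smallest_enclosing_cube}, so the new center lies in $C$.
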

\begin{proof}
Consider any minimum-width hypercubic shell $A = A^*(c^*)$
enclosing $P$ for $c^*\in \Real^d$.
That is, the center $c^*$ minimizes the width of $A^*(c)$ over all $c\in \Real^d$.
Let $B=\Ball(c^*, r)$ and $B'=\Ball(c^*, r')$ be its outer and inner hypercubes.
Note that $r = \max_{p\in P} \|p-c^*\|$ and $r' = \min_{p\in P} \|p-c^*\|$.
If $c^* \in C$, then we are done.

Suppose that $c^* \notin C$.
Then, by Lemma~\ref{lem:C}, $B$ is not a smallest enclosing hypercube for $P$.
Thus by Lemma~\ref{lem:smallest_enclosing_cube},
there is no pair of parallel facets of $B$
both of which contain a point of $P$.
On the other hand, for each pair of parallel facets of $B$,
at least one should contain a point of $P$ by our definition of $A= A^*(c^*)$.
Summarizing, there are exactly $d$ facets of $B$ containing a point of $P$
and no two of them are parallel.
Hence, there is a unique vertex $q$ of $B$ that is incident to these $d$ facets.
See \figurename~\ref{fig:C_centered}(a) for an illustration of the case of $d=2$.

\begin{figure}[tb]
\begin{center}
\includegraphics[width=.7\textwidth]{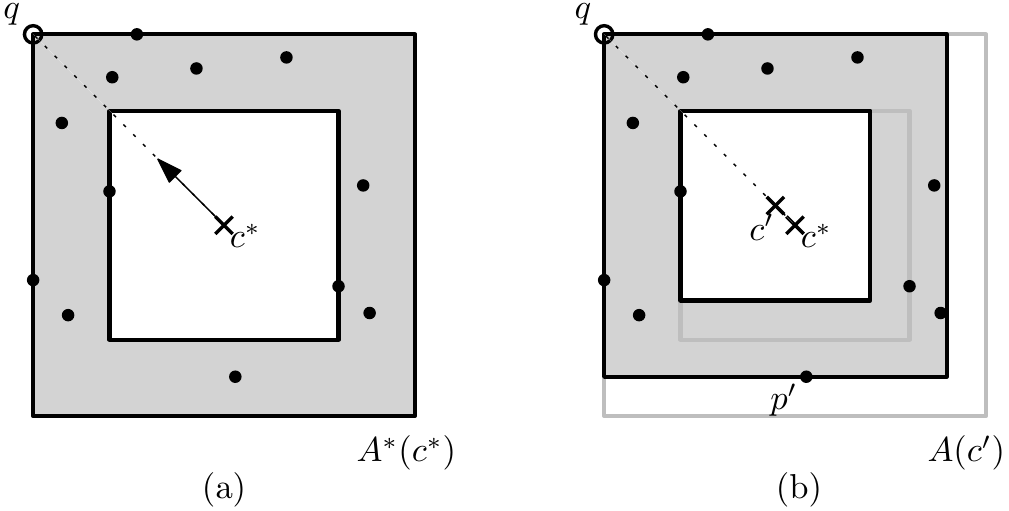}
\end{center}
\caption{An illustration to the proof of Lemma~\ref{lem:C_centered}.
 (a) A minimum-width square annulus $A = A^*(c^*)$ with $c^* \notin C$.
 Only $d=2$ facets (edges) of its outer square contain a point of $P$.
 (b) A new square annulus $A(c')$ whose width is the same as that of $A^*(c^*)$
 such that $d+1 = 3$ facets of its outer square $B(c')$ contains a point of $P$.
 }
\label{fig:C_centered}
\end{figure}

We now try to slide the center $c^*$ of the shell $A$ towards $q$.
For each $c$ on the line segment between $c^*$ and $q$,
we define a new hypercubic shell $A(c)$ such that
its outer hypercube is $B(c) = \Ball(c, r - \delta)$
and its inner hypercube is $B'(c) = \Ball(c, r' - \delta)$,
where $\delta = \|c-c^*\|$.
For any such $c$ with $\delta < r'$,
observe that $q$ is still a vertex of $B(c)$,
$B'(c) \subseteq B'(c^*)$ avoids the points in $P$ from its interior,
and the width of $A(c)$ is exactly $r - r'$, being the same as that of $A = A^*(c^*)$.
As $c$ continuously moves from $c^*$ towards $q$,
$B(c)$ encloses $P$ and thus $A(c)$ also encloses $P$
until another facet of $B(c)$ hits the $(d+1)$-st point $p'\in P$ at $c=c'$.
Hence, $A(c')$ is also another minimum-width hypercubic shell
enclosing $P$.
See \figurename~\ref{fig:C_centered}(b) for an illustration.

Finally, we show that $c' \in C$.
At $c = c'$,
observe that $B(c')$ has $d+1$ facets containing a point of $P$,
so two of the $d+1$ facets should be parallel.
Since $B(c')$ encloses $P$, we conclude that $B(c')$ is
a smallest enclosing hypercube for $P$ by Lemma~\ref{lem:smallest_enclosing_cube}.
Hence, we have $c' \in C$ by Lemma~\ref{lem:C}.
\end{proof}

This implies that we can now solve the problem by searching a center in $C\subset \Pi_0$.
For each $p\in P$ and $c\in \Pi_0$,
define $f_p(c) := \|c - p\|$ be the $L_\infty$ distance from $c$ to $p$.
Consider any minimum-width hypercubic shell $A^*(c)$ centered at $c\in C$.
Then, the radius of its outer hypercube is always fixed as $h/2$.
Hence, our problem of computing a minimum-width hypercubic shell enclosing $P$
is a bit simplified to the problem of maximizing the radius of inner hypercube:
 \[ \text{maximize } \min_{p\in P} f_p(c)  \text{ over } c \in C. \]
That is, we want to find a highest point in the lower envelope of
the functions $f_p$.
We define $\Phi(c) := \min_{p\in P} f_p(c)$ to be the lower envelope
of the functions $f_p$.

\begin{figure}[tb]
\begin{center}
\includegraphics[width=.85\textwidth]{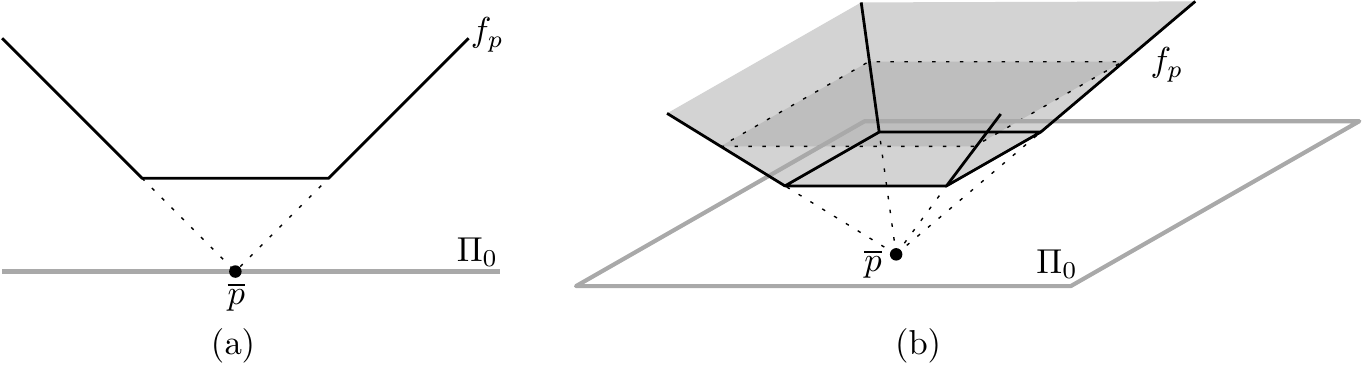}
\end{center}
\caption{An illustration to the graph of function $f_p$ for $p\in P$
for (a) $d=2$ and (b) $d=3$.
 }
\label{fig:function_f}
\end{figure}

Looking into the function $f_p$, it is defined on the $(d-1)$-dimensional
space $\Pi_0$ and
 \[ f_p(c) = \|p-c\| = \max_{i=1, \ldots d} | x_i(p) - x_i(c) |
    = \max\{ \max_{i=1, \ldots, d-1} |x_i(p) - x_i(c)|, |x_d(p)| \},\]
since $c \in \Pi_0$ and so $x_d(c) = 0$.
Observe that the first term $\max_{i=1, \ldots, d-1} |x_i(p) - x_i(c)|$
is the $L_\infty$ distance in a $(d-1)$-dimensional subspace,
while the second term $|x_d(p)|$ is a constant.
Thus, the graph $\{(c, z) \in \Pi_0\times\Real \mid z = f_p(c), c \in \Pi_0\}$
of $f_p$ is an $L_\infty$-cone cut by the hyperplane $\{z = |x_d(p)|\}$
parallel to $\Pi_0$.
See \figurename~\ref{fig:function_f} for an illustration.
From this graphical intuition, one can easily derive the following properties
of function $f_p$.
\begin{lemma} \label{lem:f_p}
 Let $p \in P$. Then, the following hold.
 \begin{enumerate}[(1)] \denseitems
  \item $f_p$ is convex.
  \item $f_p$ is piecewise linear with $2^{d-1} + 1$ patches, unless $x_d(p) = 0$.
  One of the patches forms a $(d-1)$-dimensional hypercube in $\Pi_0 \times \Real$, being parallel to $\Pi_0$.
  We call it the \emph{plateau} of $f_p$.
  \item Any point on the plateau of $f_p$ is a lowest point
  in the graph of $f_p$.
  That is, the global minimum of $f_p$ is attained at $c \in \Pi_0$
  if and only if $(c, f_p(c))$ is a point on the plateau of $f_p$,
  or equivalently, $f_p(c) = |x_d(p)|$.
 \end{enumerate}
\end{lemma}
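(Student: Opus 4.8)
The plan is to derive all three statements from the closed form for $f_p$ and the cone-and-truncation picture established just before the lemma. Writing $\bar p := (x_1(p),\dots,x_{d-1}(p),0)$ for the orthogonal projection of $p$ onto $\Pi_0$ and $g(c) := \max_{i=1,\dots,d-1}|x_i(p)-x_i(c)|$ for the $L_\infty$ distance in $\Pi_0$ from $c$ to $\bar p$, we have, for every $c\in\Pi_0$ (where $x_d(c)=0$),
\[
 f_p(c) \;=\; \max\bigl\{\, g(c),\ |x_d(p)| \,\bigr\}.
\]
For part~(1) I would note that $f_p$ is the pointwise maximum of the affine functions $c\mapsto \pm(x_i(p)-x_i(c))$, $i=1,\dots,d-1$, together with the constant function $c\mapsto |x_d(p)|$; each is convex, and a pointwise maximum of convex functions is convex, so $f_p$ is convex.

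For part~(2), the first step is to partition $\Pi_0$ into the closed \emph{dominance regions}
\[
 W_{i,\sigma} := \bigl\{\, c\in\Pi_0 \ \big|\ \sigma\,(x_i(p)-x_i(c)) \ge |x_j(p)-x_j(c)| \text{ for all } j\le d-1 \,\bigr\},
 \qquad i=1,\dots,d-1,\ \ \sigma\in\{+1,-1\},
\]
on each of which $g$ is a single affine function. Next I would identify the plateau: the locus on which $f_p$ equals the constant $|x_d(p)|$ is exactly $D := \{\, c\in\Pi_0 \mid \|\bar p-c\| \le |x_d(p)| \,\}$, a $(d-1)$-dimensional $L_\infty$-ball, which is a genuine full-dimensional hypercube precisely when $|x_d(p)|>0$, i.e.\ under the hypothesis $x_d(p)\neq 0$; over $D$ the graph of $f_p$ is the flat patch $\{(c,|x_d(p)|) \mid c\in D\}$, lying at constant height and hence parallel to $\Pi_0$. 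It then remains to check that each wedge $W_{i,\sigma}$ intersected with the exterior of $D$ stays a single connected affine region --- which holds because $W_{i,\sigma}\cap\{\, c \mid g(c)\ge|x_d(p)| \,\}$ is an intersection of halfspaces, hence convex --- so the graph of $f_p$ decomposes into these slanted patches plus the one plateau patch, giving the claimed patch count. When $x_d(p)=0$ the set $D$ collapses to the single apex $\bar p$ and there is no flat patch, which is why the ``unless'' clause appears.

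Part~(3) is then immediate from the same formula: $f_p(c)=\max\{g(c),|x_d(p)|\}\ge |x_d(p)|$ for every $c\in\Pi_0$, so $|x_d(p)|$ is a global lower bound, and $f_p(c)=|x_d(p)|$ holds if and only if $g(c)\le|x_d(p)|$, i.e.\ if and only if $c\in D$, i.e.\ if and only if $(c,f_p(c))$ lies on the plateau (which degenerates to the apex $(\bar p,0)$ when $x_d(p)=0$). Thus the global minimum value of $f_p$ is $|x_d(p)|$ and is attained exactly on the plateau. I do not expect a real obstacle; the one point requiring care is the bookkeeping in part~(2) --- confirming that truncating each wedge leaves exactly one full-dimensional affine patch rather than several, and treating the degenerate case $x_d(p)=0$ separately --- but this is routine once the cone-and-truncation picture is in place.
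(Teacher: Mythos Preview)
Your approach is essentially the paper's own: both derive all three parts from the decomposition $f_p(c)=\max\{g(c),|x_d(p)|\}$, noting that $g$ is a $(d-1)$-dimensional $L_\infty$ distance (hence convex, piecewise linear) and that the constant truncation adds one flat plateau patch; your write-up is simply more detailed than the paper's three-line proof.

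One point worth flagging: your wedges $W_{i,\sigma}$ are indexed by $i\in\{1,\dots,d-1\}$ and $\sigma\in\{\pm1\}$, so there are $2(d-1)$ of them, not $2^{d-1}$. These two numbers coincide for $d=2,3$ (the cases illustrated in the paper) but differ for $d\ge 4$, so your construction actually yields $2(d-1)+1$ patches rather than the $2^{d-1}+1$ stated in the lemma. The paper's own proof asserts ``$2^{d-1}$ patches'' without justification, so this appears to be a slip in the statement itself; your count is the correct one, and in any case the exact number of patches is never used downstream.
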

\begin{proof}
From the fact that $f_p(c) = \max\{ \max_{i=1, \ldots, d-1} |x_i(p) - x_i(c)|, |x_d(p)| \}$, it is obvious that $f_p$ is convex.
The $L_\infty$ distance function $c \mapsto \max_{i=1, \ldots, d-1} |x_i(p) - x_i(c)|$ in $(d-1)$-dimensional space is convex and piecewise linear
with exactly $2^{d-1}$ patches.
If $|x_d(p)| \neq 0$, then the function $f_p(c)$ adds one more patch to it,
which is parallel to $\Pi_0$.
Thus, properties (1) and (2) are true.
This patch, called the plateau of $f_p$, forms the minimum of convex function $f_p$, so property (3) holds.
\end{proof}

From the above observations on the functions $f_p$,
we can discuss local maxima of their lower envelope $\Phi$.
\begin{lemma} \label{lem:phi_maxima}
 Let $c^* \in C$ be a local maximum of $\Phi$ on subdomain $C \subset \Pi_0$.
 Then, either
 \begin{enumerate}[(i)] \denseitems
  \item $\Phi(c^*) = f_p(c^*) = |x_d(p)|$ for some $p \in P$, or
  \item for some $0\leq d' \leq d-1$,
   $c^*$ lies in a face of $C$ of dimension $d'$ and
   there are $d'+1$ distinct points $p_1, \ldots, p_{d'+1} \in P$ such that
   $\Phi(c^*) = f_{p_1}(c^*) = \cdots = f_{p_{d'+1}}(c^*)$.
 \end{enumerate}
\end{lemma}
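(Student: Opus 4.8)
My plan is to peel off alternative~(i) first, and otherwise to localize at $c^*$, reduce the maximality of $\Phi$ to a statement about the normal cones of the active functions covering a coordinate subspace, and finally count.

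First I would set $S := \{\, p\in P : f_p(c^*) = \Phi(c^*) \,\}$, the set of \emph{active} points at $c^*$, let $F$ be the smallest face of the box $C$ containing $c^*$, and put $d' := \dim F \in \{0,\dots,d-1\}$; then $c^*$ lies in the relative interior of $F$, hence is in particular a local maximum of the restriction $\Phi|_{\mathrm{aff}\,F}$. If some $p\in S$ satisfies $f_p(c^*) = |x_d(p)|$, then $(c^*,f_p(c^*))$ lies on the plateau of $f_p$ by Lemma~\ref{lem:f_p}(3), so $\Phi(c^*) = f_p(c^*) = |x_d(p)|$ and alternative~(i) holds; thus I may assume $f_p(c^*) > |x_d(p)|$ for every $p\in S$, and the goal becomes to exhibit $d'+1$ active points, which gives alternative~(ii) with this $d'$ (since $c^*$ lies in the $d'$-dimensional face $F$).

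Next, fix $p\in S$. Because $f_p(c^*) > |x_d(p)|$, the horizontal patch of $f_p$ is inactive near $c^*$, so by Lemma~\ref{lem:f_p}(2) $f_p$ agrees in a neighborhood of $c^*$ with the $(d-1)$-dimensional $L_\infty$ distance $g_p(c) := \max_{1\le i\le d-1}|x_i(p)-x_i(c)|$. Hence $f_p$ is piecewise linear near $c^*$ with $\partial f_p(c^*) = \mathrm{conv}\{\, \sigma_i e_i : i\in J_p \,\}$, where $J_p := \{\, i : |x_i(p)-x_i(c^*)| = g_p(c^*) \,\}$ and $\sigma_i := \mathrm{sign}(x_i(c^*)-x_i(p))\in\{+1,-1\}$, and since $g_p(c^*) > 0$ the origin is not in $\partial f_p(c^*)$; the directional derivative is $f_p'(c^*;v) = \max_{i\in J_p}\sigma_i v_i$. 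Since $\Phi = \min_{p\in S} f_p$ near $c^*$ and is piecewise linear there, $c^*$ being a local maximum of $\Phi|_{\mathrm{aff}\,F}$ is equivalent to: no $v$ in the coordinate subspace $L := \mathrm{aff}\,F - c^*$ satisfies $f_p'(c^*;v) > 0$ simultaneously for all $p\in S$. Dualizing, writing ``free coordinate'' for an index $i$ with $e_i\in L$, this says that the cones $N_p := \{\, v\in L : \sigma_i v_i \le 0 \text{ for every free } i\in J_p \,\}$, $p\in S$, cover $L\cong\Real^{d'}$.

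What remains — and where I expect the real work to lie — is to conclude $|S|\ge d'+1$ from this covering. It is \emph{false} in general that $d'$ cones of this ``coordinate half-space'' type cannot cover $\Real^{d'}$ (two complementary half-spaces already cover the plane), so plain local optimality at an arbitrary $c^*$ is not enough; I would instead fix $c^*$ to be extremal among all maximizers of $\Phi$ on $C$, first minimizing $\dim F$, then maximizing $\sum_{p\in S}|J_p|$. The purpose of this choice is to eliminate the degenerate cones: if some $N_p$ equals all of $L$ (equivalently $J_p$ contains no free coordinate, i.e.\ $f_p$ is locally constant on $\mathrm{aff}\,F$ near $c^*$), or more generally if $\bigcup_{p\in S}J_p$ avoids some free coordinate $k$, then $\Phi$ is constant along the $k$-th coordinate line through $c^*$ inside $F$, and sliding $c^*$ along it — as in the proof of Lemma~\ref{lem:C_centered} — until the combinatorial type first changes yields a maximizer lying in a lower-dimensional face, or with an extra active point, or with some $J_p$ strictly enlarged, in every case contradicting extremality. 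This reduces matters to the non-degenerate situation where every $J_p$ is a nonempty set of free coordinates and $\bigcup_{p\in S}J_p$ meets all $d'$ free coordinates, where a purely combinatorial covering lemma — partial sign assignments with nonempty supports whose supports together cover all $d'$ coordinates and which jointly cover every full sign vector must number at least $d'+1$ — closes the proof of alternative~(ii). I anticipate the two genuinely delicate points to be (a) checking that each slide terminates at a strictly better maximizer rather than running off to the boundary of $C$ without improving the extremal quantity, and (b) the combinatorial covering lemma itself.
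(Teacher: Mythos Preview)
Your route diverges from the paper's. The paper does not compute subdifferentials; instead it invokes a black-box result of Bae, Korman, and Okamoto stating that if at a local maximum of the lower envelope of convex functions on a $d'$-dimensional convex domain only $m\le d'$ functions are active, then the envelope is constant on some $(d'{+}1{-}m)$-flat through that point. From this constancy the paper argues that some active $f_{p_i}$ must sit on its plateau, giving case~(i). Your approach replaces this with an explicit directional-derivative computation, which is more self-contained and makes the geometry of the active patches transparent.

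The obstacle you raise is genuine: $d'$ coordinate-halfspace cones \emph{can} cover $\Real^{d'}$, so the covering condition by itself does not force $|S|\ge d'+1$ at an arbitrary local maximum. Your remedy, however --- replacing $c^*$ by an extremal maximizer and sliding --- silently changes the target statement. The lemma asserts the dichotomy for \emph{every} local maximum $c^*$; once you slide you land at a different point $c^{**}$, and while alternative~(i) depends only on the common value $\Phi(c^*)=\Phi(c^{**})$ and would transfer back, alternative~(ii) concerns the face dimension and the number of active functions \emph{at the point itself} and does not transfer. In fact one can manufacture, already for $d=3$, an interior non-strict local maximum of $\Phi$ with exactly two active $f_p$'s and neither on its plateau (place the two relevant projections on a common line parallel to an axis, so that $\Phi$ is locally $\min(c_1,\,2-c_1)$ and constant in $c_2$); there both (i) and (ii) fail at that particular $c^*$. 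So your extremality argument establishes the existential version --- \emph{some} maximizer satisfies the dichotomy --- which is exactly what Lemmas~\ref{lem:conf_3d} and~\ref{lem:conf_highd} actually need, but not the lemma as written. Either weaken the hypothesis to a suitably chosen maximizer before running your argument, or recognize that the universal claim over all local maxima needs a different idea.
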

\begin{proof}
We make use of a general theorem on local maxima of the lower envelope
of convex functions, which was proved by Bae et al.~\cite{bko-gdpd-13},
stated as follows:
\begin{quote}
 (*) Let $d'$ be any positive integer.
 Let $\mathcal{F}$ be a finite family of real-valued convex functions
 defined on a convex subset $C'\subseteq \Real^{d'}$ and
 $g(c):= \min_{f \in\mathcal{F}} f(c)$ be their pointwise minimum.
 Suppose that $g$ attains a local maximum at $c^* \in C'$ and
 there are exactly $m \leq d'$ functions $f_1,\ldots,f_m \in\mathcal{F}$
 such that $f_i(c^*)=g(c^*)$ for each $i=1,\ldots,m$.
 Then, there exists a $(d'+1-m)$-flat\footnote{%
 A \emph{$d''$-flat} is an affine subspace of dimension $d''$.}
  $\varphi \subset \Real^{d'}$ through $c^*$
 such that $g$ is constant on $\varphi \cap U$
 for some neighborhood $U \subset \Real^{d'}$ of $c^*$ with $U\subset C'$.
\end{quote}
Informally speaking, the above theorem tells us that
if the number of functions that simultaneously appear on the lower envelope $g$
at a local maximum $c^*$ is not enough,
then $g$ is constant near $c^*$.
See~\cite{bko-gdpd-13} for its proof and discussion.

We apply the above theorem (*) to our situation.
Let $F$ be a face of $C$ of dimension $d'$ for $0 \leq d' \leq d-1$.
Note that, in particular, if $d' = d-1$, then $F$ is the interior of $C$.
Assume that $c^* \in F$ is a local maximum of $\Phi$ on $C$.
If $d'=0$, then $F$ is a vertex of $C$ and there must be at least one $p \in P$
such that $\Phi(c^*) = f_p(c^*)$, so we are done.
Thus, in the following, we assume $1 \leq d' \leq d-1$.

Now, assume that there are $m$ distinct points $p_1, \ldots, p_m \in P$
such that $\Phi(c^*) = f_{p_1}(c^*) = \cdots = f_{p_m}(c^*)$.
If $m \geq d'+1$, then this is case (ii) and we are done.
Suppose that $m \leq d'$.
Consider the restriction $f_p|_F$ of functions $f_p \colon \Pi_0 \to \Real$
to $F$, for each $p\in P$,
also the restriction $\Phi|_F$ of $\Phi$ to $F$.
Let $\mathcal{F} := \{f_p|_F \mid p \in P\}$.
Note that $\Phi|_F(c) = \min_{f \in \mathcal{F}} f(c)$
and $f_{p_i}|_F(c^*) = \Phi|_F(c^*)$ for each $i \in \{1, \dots, m\}$.
Since $c^*$ is a local maximum of $\Phi$, it is also a local maximum
of $\Phi|_F$ in $F$.
Hence, we can apply the theorem (*), concluding that
$\Phi|_F$ is constant near $c^*$.
This implies that every $f_{p_i}|_F$ must be constant near $c^*$
since $\Phi|_F(c^*) = f_{p_i}|_F(c^*)$.
From the properties of $f_p$ observed in Lemma~\ref{lem:f_p},
this is possible only if $f_{p_i}(c^*) = |x_d(p_i)|$.
So, this is case (i) of the lemma.
\end{proof}

\section{Algorithm for the Minimum-Width Cubic Shell} \label{sec:cubic_shell}
Let $P$ be a set of $n$ points in $\Real^3$.
In this section, we present an $O(n \log^2 n)$ time algorithm
that computes a minimum-width cubic shell enclosing $P$.

The function $f_p$ is piecewise linear of constant complexity,
defined on domain $\Pi_0$, which is a two-dimensional subspace.
Thus, one can apply an available machinery that computes the lower envelope
of the piecewise linear functions.
It was successful for the case of $d=2$; it is just computing
the lower envelope of line segments in the plane, and can be done
in $O(n \log n)$ worst-case time using a known algorithm,
as shown by Gluchshenko et al.~\cite{ght-oafepramw-09}.
However, for $d=3$, it takes $O(n^2 \alpha(n))$ time~\cite{egs-ueplf:aa-89}
to compute the envelope $\Phi$, and this is too much for us.

We suggest another approach
which does not explicitly compute the whole envelope $\Phi$.
Here, we consider the case of $d=3$.
Thus, $\Pi_0$ is the $x_1x_2$-plane and $C$ is a rectangle in $\Pi_0$.

For the purpose, we define for each $p\in P$ and $c\in \Pi_0$
 \[ \fproj_p(c) := \max\{ |x_1(p) - x_1(c)|, |x_2(p) - x_2(c)| \}. \]
Note that $f_p(c) = \max\{ |x_1(p) - x_1(c)|, |x_2(p) - x_2(c)|, |x_3(p)| \}$.
Thus, $\fproj_p(c)$ is the $L_\infty$ distance between $c\in \Pi_0$
and the orthogonal projection of $p$ onto $\Pi_0$,
so basically $L_\infty$ distance between two points in a plane.
Let $w^*$ be the width of a minimum-width cubic shell enclosing $P$.
As discussed above, we have
 \[ w^* = h/2 - \max_{c \in C} \Phi(c),\]
where $h$ is the longest side length of
the smallest enclosing box $R$ for $P$ as defined above.
Let $r^* = \max_{c\in C} \Phi(c)$ and $c^* \in C$ be such that $\Phi(c^*) = r^*$.
Since $\fproj_p(c) = f_p(c)$ unless $f_p(c) = |x_3(p)|$,
Lemma~\ref{lem:phi_maxima} implies the following.
\begin{lemma} \label{lem:conf_3d}
One of the following cases (i) and (ii) holds:
\begin{enumerate}[(i)] \denseitems
 \item $r^* = |x_3(p)|$ for some $p\in P$.
 \item $c^*$ is either
  \begin{enumerate}[(a)]\denseitems
   \item a point in the interior of $C$ such that
   $r^* = \fproj_{p_1}(c^*) = \fproj_{p_2}(c^*) = \fproj_{p_3}(c^*)$
   for some $p_1, p_2, p_3 \in P$,
   \item a point on an edge of $C$ such that
   $r^* = \fproj_{p_1}(c^*) = \fproj_{p_2}(c^*)$
   for some $p_1, p_2 \in P$, or
   \item a vertex of $C$.
  \end{enumerate}
\end{enumerate}
\end{lemma}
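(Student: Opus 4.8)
The plan is to obtain this lemma directly from Lemma~\ref{lem:phi_maxima} applied with $d=3$, followed by a small amount of bookkeeping to pass from the functions $f_p$ to their planar restrictions $\fproj_p$. Since $c^*$ is a global maximizer of $\Phi$ over $C$, it is in particular a local maximum of $\Phi$ on the subdomain $C\subset\Pi_0$, so Lemma~\ref{lem:phi_maxima} applies and yields one of two outcomes: either (i$'$) $\Phi(c^*) = f_p(c^*) = |x_3(p)|$ for some $p\in P$, or (ii$'$) for some $0\le d'\le 2$, $c^*$ lies in a face $F$ of $C$ of dimension $d'$ and there are $d'+1$ distinct points $p_1,\dots,p_{d'+1}\in P$ with $\Phi(c^*) = f_{p_1}(c^*) = \cdots = f_{p_{d'+1}}(c^*)$. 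In outcome (i$'$) we are immediately in case (i) of the statement, since $r^* = \Phi(c^*) = |x_3(p)|$.

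It remains to translate outcome (ii$'$) into the language of $\fproj_p$. The only thing to check is whether any of the coinciding functions $f_{p_1},\dots,f_{p_{d'+1}}$ attains its value via its plateau, i.e.\ whether $f_{p_i}(c^*) = |x_3(p_i)|$ for some $i$. If so, then once again $r^* = \Phi(c^*) = f_{p_i}(c^*) = |x_3(p_i)|$ and we are in case (i). Otherwise $f_{p_i}(c^*)\neq |x_3(p_i)|$ for every $i$, and then, using that $f_p(c)=\max\{\fproj_p(c),|x_3(p)|\}$ coincides with $\fproj_p(c)$ whenever $f_p(c)\neq|x_3(p)|$, we conclude $\fproj_{p_i}(c^*) = f_{p_i}(c^*) = r^*$ for all $i=1,\dots,d'+1$.

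Finally I would split on the dimension $d'$ of the face $F$ of $C$ containing $c^*$, recalling that $C$ is a rectangle in $\Pi_0$. The case $d'=2$ means $F=\intr C$, so we obtain three points with $\fproj_{p_1}(c^*)=\fproj_{p_2}(c^*)=\fproj_{p_3}(c^*)=r^*$ and $c^*\in\intr C$, which is case (ii)(a); the case $d'=1$ means $F$ is an edge of $C$, giving two points with $\fproj_{p_1}(c^*)=\fproj_{p_2}(c^*)=r^*$ and $c^*$ on an edge, which is case (ii)(b); and the case $d'=0$ means $c^*$ is a vertex of $C$, which is case (ii)(c), where no additional coincidence condition is required. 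I do not expect a genuine obstacle here: the statement is essentially the $d=3$ specialization of Lemma~\ref{lem:phi_maxima} plus the substitution of $f_p$ by $\fproj_p$, and the only point demanding care is the observation that a coinciding function hitting its plateau collapses outcome (ii$'$) back into case (i).
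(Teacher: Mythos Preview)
Your proof is correct and follows essentially the same route as the paper's: apply Lemma~\ref{lem:phi_maxima} at the maximizer $c^*$, note that any plateau hit forces case (i), and otherwise replace each $f_{p_i}(c^*)$ by $\fproj_{p_i}(c^*)$ before splitting on the dimension $d'\in\{0,1,2\}$ of the face of $C$ containing $c^*$. The only cosmetic difference is that the paper assumes the negation of case (i) globally (so $f_p(c^*)=\fproj_p(c^*)$ for every $p\in P$) rather than checking only the coinciding $p_i$, but the arguments are equivalent.
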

\begin{proof}
Recall that $c^* \in C$ maximizes $\Phi$ over $C$ and $r^* = \Phi(c^*)$,
so $c^*$ is a local maximum of $\Phi$ in $C$.
Hence, we can apply Lemma~\ref{lem:phi_maxima} for $c^*$.

Suppose that we are not in case (i) of the lemma, in which we have
$r^* = |x_3(p)|$ for some $p\in P$.
In other words, we suppose that $r^* \neq |x_3(p)|$ for all $p\in P$.
Note that this implies that $f_p(c^*) = \fproj_p(c^*)$ for all $p\in P$,
as discussed above.
This also excludes case (i) of Lemma~\ref{lem:phi_maxima},
so this should be case (ii) of Lemma~\ref{lem:phi_maxima}.
Specifically, it holds that
for $0\leq d' \leq 2$, $c^*$ lies in a $d'$-face of $C$ and
there are $d'+1$ distinct points $p_1, \dots, p_{d'+1} \in P$ such that
 \[r^* = \Phi(c^*) = f_{p_1}(c^*) = \cdots = f_{p_{d'+1}}(c^*).\]

There are three cases according to the dimension $d'$ of the face in which $c^*$ lies.
\begin{enumerate}[(a)] \denseitems
\item
If $c^*$ lies in a $2$-face of $C$, or the interior of $C$, then
there are three points $p_1, p_2, p_3 \in P$ such that
$r^* = f_{p_1}(c^*) = f_{p_2}(c^*) = f_{p_3}(c^*)$.
Since $f_p(c^*) = \fproj_p(c^*)$ for all $p\in P$,
we have $r^* = \fproj_{p_1}(c^*) = \fproj_{p_2}(c^*) = \fproj_{p_3}(c^*)$.
\item
If $c^*$ lies in a $1$-face, or an edge, of $C$, then
there are two points $p_1, p_2 \in P$ such that
$r^* = f_{p_1}(c^*) = f_{p_2}(c^*)$.
Since $f_p(c^*) = \fproj_p(c^*)$ for all $p\in P$,
we have $r^* = \fproj_{p_1}(c^*) = \fproj_{p_2}(c^*)$.
\item
If $c^*$ lies in a $0$-face, then $c^*$ is a vertex of $C$.
\end{enumerate}
Hence, the lemma follows.
\end{proof}

Our algorithm computes $r^* = \max_{c\in C} \Phi(c)$ and
a corresponding center $c^*$ such that $r^* = \Phi(c^*)$
by separately handling two cases (i) and (ii) of Lemma~\ref{lem:conf_3d}.
For case (i), let $r^*_1$ be the largest value in $\{ |x_3(p)| \mid p\in P\}$
such that there exists a cubic shell of width $r^*_1$ and center $c^*_1\in C$
that encloses $P$.
If the solution $r^*$ falls in case (i), then it should hold that $r^* = r^*_1$.

For case (ii), any point $c \in C$ is called a \emph{candidate center}
if it satisfies the condition of case (ii); more precisely, if $c$ is either
  \begin{enumerate}[(a)]\denseitems
   \item a point in the interior of $C$ such that
   $\Phi(c) = \fproj_{p_1}(c) = \fproj_{p_2}(c) = \fproj_{p_3}(c)$
   for some $p_1, p_2, p_3 \in P$,
   \item a point on an edge of $C$ such that
   $\Phi(c) = \fproj_{p_1}(c) = \fproj_{p_2}(c)$
   for some $p_1, p_2 \in P$, or
   \item a vertex of $C$.
  \end{enumerate}
Let $Q$ be the set of all candidate centers, and
let $r^*_2 := \max_{c \in Q} \Phi(c)$ and
$c^*_2 \in Q$ be such that $r^*_2 = \Phi(c^*_2)$.
If the solution $r^*$ and $c^*$ does not fall in case (i),
then we will have $r^* = r^*_2$.

Our algorithm thus computes $r^*_1$ and $r^*_2$ and then
$r^* = \max\{r^*_1, r^*_2\}$ by Lemma~\ref{lem:conf_3d}.
Hence, we are done by reporting $r^* = \max\{r^*_1, r^*_2\}$
and its corresponding center and cubic shell.
Note that the width of the minimum-width shell is $h/2 - r^*$.

In the following, we describe how to handle each case and compute
$r^*_1$ and $r^*_2$.

\subsection{Case (i)}
Note that there are only $n$ candidate values $\{ |x_3(p)| \mid p\in P\}$
for $r^*_1$.
Here, we consider the following decision problem:
\begin{quote}
 \textit{given a real $w\geq 0$, is there a cubic shell $A$ enclosing $P$
 with width at most $w$ and center in $C$?}
\end{quote}
This is equivalent to deciding if the \emph{sublevel set} $U(w)$ of $\Phi$
covers $C$,
where
 \[U(w) := \{ c\in \Pi_0 \mid \Phi(c) < w \} \]
that is, whether or not $C \subseteq U(w)$.

For a given real number $w$ and any $c \in \Pi_0$,
$\Phi(c) < w$ if and only if there exists a point $p\in P$
such that $f_p(c) < w$.
Since $f_p(c) = \|c - p\|$, the above condition is again equivalent to
$c \in \intr \Ball(p, w)$ for some $p\in P$ or
$c \in \bigcup_{p\in P} \intr \Ball(p, w)$,
where $\intr \Ball(p,w)$ denotes the interior of $\Ball(p,w)$.
Hence, $U(w)$ is indeed the intersection of the union
$\bigcup_{p \in P} \intr \Ball(p, w)$ of $n$ cubes by $\Pi_0$.

Let $\Ball_0(p, r) := \Ball(p, r) \cap \Pi_0$ be
the intersection of the $L_\infty$ ball $\Ball(p, r)$ by $\Pi_0$.
We then have
 \[ U(w) = \bigcup_{p \in P} \intr \Ball_0(p, w).\]
Note that $\Ball_0(p,w)$ is either empty if $|x_3(p)| \geq w$, or
a square of radius $w$.

After specifying $\Ball_0(p, w)$ for each $p\in P$,
we can explicitly compute the union $U(w)$ of squares of equal radius $w$.
It is well known that the complexity of the union of $n$ squares is $O(n)$~\cite{bsty-vdhdcpdf-98}, and
one can compute it in $O(n \log n)$ time by a standard plane-sweep algorithm.
We then intersect $U(W)$ by $C$.
If there is a point $c \in C$ such that $c \notin U(w)$,
then we have $\Phi(c) \geq w$ and thus the cubic shell $A^*(c)$ centered at $c$
has width at most $w$, so we report that there exists a cubic shell of width $w$
enclosing $P$.
Otherwise, if $C \subseteq U(w)$, then there is no such shell.

Thus, we conclude the following.
\begin{lemma} \label{lemma:decision_3d}
 Given a set $P$ of $n$ points in $\Real^3$ and a real $w \geq 0$,
 we can decide if there exists a cubic shell enclosing $P$ of width $w$
 in $O(n \log n)$ time in the worst case.
 If exists, such a cubic shell can be output in the same time bound.
\end{lemma}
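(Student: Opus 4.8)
The plan is to recast the question as a covering problem in the plane: deciding whether some cubic shell of width at most $w$ encloses $P$ is the same as deciding whether the rectangle $C$ fails to be covered by a union of at most $n$ congruent axis-aligned squares, and I would settle the latter by a single plane sweep.

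First I set up the reduction. If a cubic shell of width at most $w$ enclosing $P$ exists at all, then a minimum-width one does, and by Lemma~\ref{lem:C_centered} it may be taken with center in $C$; so it suffices to test whether some $c\in C$ yields such a shell. For a fixed $c\in C$ the shell $A^*(c)$ has outer radius $h/2$ and inner radius $\Phi(c)=\min_{p\in P}f_p(c)$, so its width is $h/2-\Phi(c)$, which is at most $w$ exactly when $\Phi(c)\ge h/2-w$, i.e. when $c\notin U(\rho)$ for $\rho:=h/2-w$, where $U(\rho)=\{c\in\Pi_0:\Phi(c)<\rho\}$. Thus the problem becomes: decide whether $C\subseteq U(\rho)$, and report a point of $C\setminus U(\rho)$ if not. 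Since $f_p(c)=\|c-p\|$ and $\Ball(p,\rho)$ is a cube of radius $\rho$, the slice $\Ball_0(p,\rho):=\Ball(p,\rho)\cap\Pi_0$ is empty if $|x_3(p)|\ge\rho$ and is otherwise the axis-aligned square of radius $\rho$ centered at the projection of $p$ onto the $x_1x_2$-plane, a square of side length $2\rho$ independent of $p$; hence $U(\rho)=\bigcup_{p\in P}\intr\Ball_0(p,\rho)$ is a union of interiors of at most $n$ congruent axis-aligned squares, which I can list in $O(n)$ time.

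Next I compute and query this union. The union of $n$ congruent axis-aligned squares has combinatorial complexity $O(n)$~\cite{bsty-vdhdcpdf-98}, and its boundary can be built in $O(n\log n)$ time by a standard left-to-right plane sweep whose status structure stores the set of maximal $y$-intervals currently covered; this set undergoes $O(n)$ changes over the sweep, each processed in $O(\log n)$ time. To decide $C\subseteq U(\rho)$ I clip the union to the rectangle $C$ and test whether the clipped region is all of $C$; this can be folded into the same sweep by additionally tracking, while the sweep line lies within the $x$-span of $C$, whether the covered portion of the sweep line ever fails to contain $C\cap(\text{sweep line})$. If an uncovered point $c\in C$ is found, the answer is ``yes'' and I output the witnessing shell with cubes $\Ball(c,h/2)$ and $\Ball(c,\Phi(c))$; otherwise the answer is ``no''. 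Every step runs in $O(n\log n)$ worst-case time.

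The delicate point is the containment test rather than the union construction: one unions \emph{open} squares but must decide containment of the \emph{closed} rectangle $C$, so degenerate configurations in which neighboring squares only abut along shared boundary segments must not be flagged as real gaps, and the boundary edges of $C$ itself must be handled consistently with the strict inequality defining $U(\rho)$. On the real RAM with exact arithmetic these amount to careful bookkeeping inside the sweep, but that is where the argument must be precise; the $O(n)$ bound on the union and the sweep itself are classical.
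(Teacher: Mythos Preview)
Your proposal is correct and follows essentially the same route as the paper: reduce the decision to whether the rectangle $C$ is covered by the sublevel set $U(\rho)$, observe that $U(\rho)$ is a union of at most $n$ congruent axis-aligned open squares in $\Pi_0$, and decide containment with an $O(n\log n)$ plane sweep using the $O(n)$ union bound. You are in fact more careful than the paper in explicitly introducing $\rho=h/2-w$ and in flagging the open/closed boundary issue, but the underlying argument is the same.
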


After sorting $\{ |x_3(p)|  \mid p \in P\}$ in $O(n \log n)$ time,
we can find the biggest value $r^*_1$ for which the above decision algorithm
returns ``yes'' in $O(n \log^2 n)$ time by a binary search.
Such a point $c^*_1 \in C$ that $r^*_1 = \Phi(c^*_1)$ can also be found
in the same time bound.

\subsection{Case (ii)}
Next, we describe how to compute $r^*_2$ and $c^*_2$.
As defined above, $Q\subset C$ is the set of all candidate centers.
Again, recall that $\fproj_p(c)$ is equivalent to the $L_\infty$ distance
in the plane $\Pi_0$ between the projection of $p$ onto $\Pi_0$ and
a point $c\in \Pi_0$.
This means that each candidate center $c$ is, unless it is a vertex of $C$,
a point on the locus of equidistant points from two or more
points in on the plane $\Pi_0$ under the $L_\infty$ distance.
This naturally suggests an application of the $L_\infty$ Voronoi diagram
in the plane $\Pi_0$.

For each $p\in P$, let $\proj{p}$ be the orthogonal projection of $p$
onto the plane $\Pi_0$, and $\proj{P}:= \{\proj{p} \mid p\in P\}$.
Let $\VD(\proj{P})$ be the $L_\infty$ Voronoi diagram for points $\proj{P}$
on $\Pi_0$, that is,
the decomposition of $\Pi_0$ into vertices, edges, and cells,
each of which is the set of points having a common set of nearest points
in $\proj{P}$ under the $L_\infty$ distance.
It is well known that $\VD(\proj{P})$ consists of $O(n)$ vertices, edges, and faces,
and can be computed in $O(n \log n)$ time~\cite{ }.
In particular, we have the following:
\begin{enumerate}[(a)] \denseitems
 \item A point $c \in \Pi_0$ is a vertex of $\VD(\proj{P})$ if and only if
we have three nearest points $\proj{p_1}, \proj{p_2}, \proj{p_3} \in \proj{P}$
so that $\fproj_{p_1}(c) = \fproj_{p_2}(c) = \fproj_{p_3}(c)$.
 \item A point $c \in \Pi_0$ lies on an edge of $\VD(\proj{P})$ if and only if
 we have exactly two nearest points $\proj{p_1}, \proj{p_2} \in \proj{P}$
so that $\fproj_{p_1}(c) = \fproj_{p_2}(c)$.
\end{enumerate}
This gives us a necessary condition of candidate centers.
\begin{lemma} \label{lem:candidate_center_3d}
 Let $c \in Q$ be a candidate center.
 Then, $c$ is either a vertex of $\VD(\proj{P})$,
 an intersection of an edge of $\VD(\proj{P})$ and an edge of $C$,
 or a vertex of $C$.
\end{lemma}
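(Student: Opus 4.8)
The plan is to go through the three cases in the definition of a candidate center and translate each into one of the three alternatives in the statement. Case (c), where $c$ is a vertex of $C$, is already one of the alternatives, so there is nothing to do there. The work is in cases (a) and (b), where the defining condition is phrased in terms of equal values of the planar functions $\fproj_{p_i}$, and the goal is to reinterpret these equalities as incidences with the $L_\infty$ Voronoi diagram $\VD(\proj P)$.

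First I would establish the bridging fact: if $c\in Q$ is a candidate center that is not a vertex of $C$, and $p_1,\dots,p_k$ (with $k=3$ in case (a) and $k=2$ in case (b)) are the points witnessing $\Phi(c)=\fproj_{p_1}(c)=\cdots=\fproj_{p_k}(c)$, then $\proj{p_1},\dots,\proj{p_k}$ are among the sites of $\proj P$ realizing $\min_{p\in P}\fproj_p(c)$; i.e., each $\proj{p_i}$ is a nearest site of $\proj P$ to $c$ under the planar $L_\infty$ distance. The equalities already place each $\proj{p_i}$ at $L_\infty$-distance exactly $\Phi(c)$ from $c$, so what remains is to rule out a site $\proj q$ with $\fproj_q(c)<\Phi(c)$. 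Such a $q$ would satisfy $f_q(c)=\max\{\fproj_q(c),|x_3(q)|\}\ge\Phi(c)>\fproj_q(c)$, hence $f_q(c)=|x_3(q)|$, meaning $q$ is a point whose plateau value dominates at $c$ (cf.\ Lemma~\ref{lem:f_p}). I would argue this is excluded for candidate centers because a candidate center is, by definition, of case (ii) in Lemma~\ref{lem:conf_3d}, whose standing hypothesis is the negation of case (i); under that hypothesis the points of $P$ that are relevant at $c$ satisfy $f_p(c)=\fproj_p(c)$, so $\min_{p\in P}\fproj_p(c)=\min_{p\in P}f_p(c)=\Phi(c)$, and there is no strictly closer projected site. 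I expect this to be the main obstacle: one must pin down precisely in what sense a candidate center inherits the ``not case (i)'' condition, and thereby conclude that the minimum of the planar distances from $c$ is attained exactly by the witnessing projections.

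Granting the bridging fact, the conclusion follows from the combinatorial characterization of $\VD(\proj P)$ recalled immediately before the lemma. In case (a) we obtain at least three co-nearest sites of $\proj P$ at $c$, so $c$ is a vertex of $\VD(\proj P)$. In case (b) we obtain at least two co-nearest sites: if in fact three or more are co-nearest then $c$ is again a vertex of $\VD(\proj P)$, and otherwise $c$ lies on an edge of $\VD(\proj P)$; since in case (b) the candidate center also lies on an edge of $C$, it is an intersection of an edge of $\VD(\proj P)$ with an edge of $C$. Combined with the trivial case (c), this exhausts the three alternatives claimed. A final check I would include is that these incidences are the only ones possible — in particular that an edge of $\VD(\proj P)$ cannot run along an edge of $C$ in a way that produces a one-parameter family of candidate centers — but since $\VD(\proj P)$ and $C$ are both of constant combinatorial size this is routine.
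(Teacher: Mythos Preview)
Your case split matches the paper's proof exactly, and the paper is even terser: it simply asserts that condition~(a) makes $c$ a Voronoi vertex and condition~(b) places $c$ on a Voronoi edge, without ever arguing that the witnessing $p_i$ are \emph{nearest} sites in $\proj P$. So you are right that the ``bridging fact'' is where the content lies, and the paper does not spell it out.

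Your proposed justification for it, however, does not go through. The definition of a candidate center is purely the disjunction (a)/(b)/(c); it does \emph{not} carry the hypothesis ``not case~(i)'' from Lemma~\ref{lem:conf_3d}. Hence for an arbitrary candidate center $c$ you cannot conclude $\min_{p\in P}\fproj_p(c)=\Phi(c)$. Concretely, nothing rules out a point $q\in P$ with $|x_3(q)|\geq\Phi(c)$ whose projection $\proj q$ is strictly closer to $c$ than the witnesses $\proj{p_i}$; then $f_q(c)=|x_3(q)|\geq\Phi(c)$ leaves $\Phi(c)$ unchanged, yet $\proj q$ is the unique planar nearest site and $c$ sits in the interior of its Voronoi cell, not at a vertex. (The paper's higher-dimensional analogue, Lemma~\ref{lem:candidate_center_highd}, attempts exactly your bridging step via the inequality ``$\Phi(c)\leq\min_{p}\fproj_p(c)$'', but since $f_p\geq\fproj_p$ one only has $\Phi(c)\geq\min_p\fproj_p(c)$, so that line is in the wrong direction.) What the algorithm actually requires is weaker: when the global maximizer $c^*$ is \emph{not} in case~(i), your bridging argument is valid at $c^*$, and that suffices to place $c^*$ in the computed set $Q$. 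Your reasoning is correct for that specific $c^*$; it just does not prove the lemma for every candidate center as defined.

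Your closing check about Voronoi edges running along edges of $C$ is orthogonal to the lemma, which concerns a single point $c$; it is a finiteness issue for $Q$, handled by general position or the $O(n)$ bound on $\VD(\proj P)$.
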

\begin{proof}
By definition, any candidate center $c$ should satisfy either
(a) there are three points $p_1, p_2, p_3 \in P$ such that
$\fproj_{p_1}(c) = \fproj_{p_2}(c) = \fproj_{p_3}(c)$,
(b) $c$ is a point on an edge of $C$ and there are two points $p_1, p_2 \in P$
such that $\fproj_{p_1}(c) = \fproj_{p_2}(c)$, or
(c) $c$ is a vertex of $C$.
From the property of the Voronoi diagram $\VD(\proj{P})$ discussed above,
in case (a), $c$ is a vertex of $\VD(\proj{P})$;
in case (b), $c$ is also a point on an edge of $\VD(\proj{P})$.
Hence, the lemma is proved.
\end{proof}

Now, we are ready to describe our algorithm computing $r^*_2$:
We first compute $\VD(\proj{P})$ and then compute the intersection points
between edges of $\VD(\proj{P})$ and edges of $C$.
Initially, we let $Q$ include all vertices of $\VD(\proj{P})$,
all intersection points between an edge of $\VD(\proj{P})$ and an edge of $C$,
and all vertices of $C$.
By Lemma~\ref{lem:candidate_center_3d}, $Q$ contains all candidate centers.
For each $c \in Q$, test if $f_p(c) = \fproj_p(c)$
for every nearest point $\proj{p}$ from $c$ among $\proj{P}$.
This test can be done in $O(1)$ time since $\VD(\proj{P})$ stores
nearest points for each vertex, edge, and cell.
If the test is passed, $c$ is a candidate center by definition
and so we keep $c$ in $Q$;
otherwise, we discard $c$ and remove $c$ from $Q$.
Now, $Q$ consists of all candidate centers.
Note that if $c$ is a candidate center, it holds that
$\Phi(c) = f_p(c) = \fproj_p(c)$ for each nearest point $\proj{p}\in \proj{P}$
from $c$.
We then pick a candidate center $c^*_2 \in Q$ such that
$\Phi(c^*_2) = \max_{c \in Q} \Phi(c) = r^*_2$.
All the effort to compute $r^*_2$ and $c^*_2$ is bounded by
$O(n \log n)$ time.

\bigskip

Summarizing, we handle two cases (i) and (ii) of Lemma~\ref{lem:conf_3d}
separately, computing $r^*_1$ and $r^*_2$, and choose the bigger one as $r^*$.
Then, a minimum-width cubic shell enclosing $P$ is obtained
from the corresponding center and the radii $h/2$ and $r^*$
of its outer and inner cubes.

\begin{theorem} \label{thm:cubic_shell}
 Let $P$ be a set of $n$ points in $\Real^3$.
 A minimum-width cubic shell enclosing $P$ can be computed in
 $O(n \log^2 n)$ time in the worst case.
\end{theorem}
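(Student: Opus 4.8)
The plan is to assemble the machinery already developed, reducing the optimization to the two cases of Lemma~\ref{lem:conf_3d} and solving each within the claimed bound. By Lemma~\ref{lem:C_centered} it suffices to search for an optimal center inside the rectangle $C$, and for any $c\in C$ the outer cube of $A^*(c)$ has fixed radius $h/2$ while its inner cube has radius $\Phi(c)$; hence the minimum width equals $h/2 - r^*$ with $r^* = \max_{c\in C}\Phi(c)$, and it remains to compute $r^*$ together with a realizing center $c^*$. Computing $R$, $h$, and $C$ takes $O(n)$ time. By Lemma~\ref{lem:conf_3d}, $r^*$ is attained either in case (i), where $r^* = |x_3(p)|$ for some $p\in P$, or in case (ii), where the realizing center lies in the set $Q$ of candidate centers; I would compute the best value $r^*_1$ achievable in case (i) and the best value $r^*_2$ achievable in case (ii), and return $r^* = \max\{r^*_1, r^*_2\}$.

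\emph{Case (i).} There are only $n$ candidate values $\{|x_3(p)|\mid p\in P\}$, which I sort in $O(n\log n)$ time. For a fixed target value $w$, Lemma~\ref{lemma:decision_3d} decides in $O(n\log n)$ time whether some $c\in C$ attains the required bound: the locus $U(w)=\bigcup_{p\in P}\intr\Ball_0(p,w)$ is a union of at most $n$ axis-aligned squares of common radius $w$ (the square $\Ball_0(p,w)$ being empty exactly when $|x_3(p)|\ge w$), which has $O(n)$ complexity and can be built by a plane sweep, after which one tests in linear time whether it fails to cover $C$. Since the predicate is monotone in $w$, a binary search over the $n$ sorted candidate values calls the decision procedure $O(\log n)$ times and returns $r^*_1$ and a matching $c^*_1$ in $O(n\log^2 n)$ total time.

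\emph{Case (ii).} Here I first build the $L_\infty$ Voronoi diagram $\VD(\proj{P})$ of the projections $\proj{P}$ onto $\Pi_0$ in $O(n\log n)$ time; it has $O(n)$ size and stores the nearest site(s) of every cell, edge, and vertex. By Lemma~\ref{lem:candidate_center_3d}, every candidate center is a vertex of $\VD(\proj{P})$, an intersection of a Voronoi edge with an edge of $C$, or a vertex of $C$, so there are only $O(n)$ of them and they can be enumerated in $O(n\log n)$ time. For each such point $c$, I test in $O(1)$ time, using a nearest projected site $\proj{p}$ of $c$, whether $f_p(c)=\fproj_p(c)$; if so $c$ is a genuine candidate center and $\Phi(c)=f_p(c)$, otherwise it is discarded. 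Taking the surviving candidate maximizing $\Phi$ yields $r^*_2$ and $c^*_2$ in $O(n\log n)$ time.

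\emph{Combination and main obstacle.} Returning $r^*=\max\{r^*_1,r^*_2\}$ with the associated center, and outer and inner cubes of radii $h/2$ and $r^*$, gives a minimum-width cubic shell, and the total running time is dominated by the $O(n\log^2 n)$ binary search of case (i). Correctness of this dispatch rests entirely on Lemma~\ref{lem:conf_3d} (hence on Lemma~\ref{lem:phi_maxima}) guaranteeing that the optimum is captured by one of the two cases, and on Lemma~\ref{lem:candidate_center_3d} bounding the candidate set by $O(n)$; with those in hand, the only delicate point in the assembly is the boundary behavior of the decision procedure of Lemma~\ref{lemma:decision_3d} — correctly treating the empty squares $\Ball_0(p,w)$ and the strict-versus-closed coverage test for $C\subseteq U(w)$ so that the binary search converges to the exact value $r^*_1$ — which I expect to be the main thing requiring care, the remainder being a routine combination of the preceding lemmas.
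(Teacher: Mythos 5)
Your proposal follows the paper's proof essentially step for step: reduce via Lemma~\ref{lem:C_centered} to maximizing $\Phi$ over $C$, dispatch on the two cases of Lemma~\ref{lem:conf_3d}, handle case (i) by binary search over the sorted values $|x_3(p)|$ using the $O(n\log n)$ decision procedure of Lemma~\ref{lemma:decision_3d}, handle case (ii) by computing the planar $L_\infty$ Voronoi diagram of $\proj{P}$ and scanning the $O(n)$ candidate centers of Lemma~\ref{lem:candidate_center_3d}, and combine. One small point to tighten: in case (ii), the $O(1)$ test for a candidate $c$ should examine all of the (constantly many) nearest projected sites stored for the Voronoi feature containing $c$, not a single arbitrary one, since different nearest sites can have different $|x_3|$-coordinates and hence pass or fail the test $f_p(c)=\fproj_p(c)$ independently.
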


\section{Minimum-Width Hypercubic Shell} \label{sec:hypercubic_shell}
Our approach for cubic shells in $\Real^3$ easily extends to
hypercubic shells in $\Real^d$ for $d > 3$.
In this section, let $d > 3$ be a constant.

As done for $d=3$, we define
 \[ \fproj_p(c) := \max_{i=1, \ldots, d-1} |x_i(p) - x_i(c)|. \]
Note that $f_p(c) = \max\{ \fproj_p(c), |x_d(p)| \}$ and
$f_p(c) = \fproj_p(c)$ unless $f_p(c) = |x_d(p)|$.
Thus, $\fproj_p(c)$ is the $L_\infty$ distance between $c\in \Pi_0$
and the orthogonal projection of $p$ onto $\Pi_0$,
so the $L_\infty$ distance between two points in the $(d-1)$-dimensional space.
Let $w^*, r^*$ be defined as above.
So, we have an analogue of Lemma~\ref{lem:conf_3d}.
\begin{lemma} \label{lem:conf_highd}
\begin{enumerate}[(i)] \denseitems
 \item $r^* = |x_d(p)|$ for some $p\in P$, or
 \item $c^*$ is a point in a face of $C$ of dimension $d'$ with $0 \leq d' \leq d-1$ and $r^* = \fproj_{p_1}(c^*) = \cdots = \fproj_{p_{d'+1}}(c^*)$
 for $d'+1$ distinct points $p_1, \ldots, p_{d'+1} \in P$.
\end{enumerate}
\end{lemma}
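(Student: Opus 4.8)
The plan is to mirror, almost verbatim, the argument used for Lemma~\ref{lem:conf_3d} in the three-dimensional case, now invoking the general-dimensional statement of Lemma~\ref{lem:phi_maxima} instead of its two-dimensional specialization. First I would recall the setup: $c^* \in C$ is a point maximizing $\Phi$ over the $(d-1)$-dimensional box $C \subset \Pi_0$, and $r^* = \Phi(c^*)$; in particular $c^*$ is a local maximum of $\Phi$ restricted to $C$, so Lemma~\ref{lem:phi_maxima} applies at $c^*$. I would also recall the key pointwise identity established just before Lemma~\ref{lem:conf_3d}: for every $p \in P$ and every $c \in \Pi_0$ we have $f_p(c) = \max\{\fproj_p(c), |x_d(p)|\}$, and consequently $f_p(c) = \fproj_p(c)$ unless $f_p(c) = |x_d(p)|$ --- this is exactly the sentence already inserted at the start of Section~\ref{sec:hypercubic_shell}.

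The proof then splits on whether we are in case (i) of the statement. If $r^* = |x_d(p)|$ for some $p \in P$, there is nothing to prove. So assume $r^* \neq |x_d(p)|$ for all $p \in P$. Then for every $p \in P$ with $f_p(c^*) = \Phi(c^*) = r^*$ we must have $f_p(c^*) \neq |x_d(p)|$, hence $f_p(c^*) = \fproj_p(c^*)$ for every point $p$ realizing the lower envelope at $c^*$ (and in fact, since $r^* = \Phi(c^*) = \min_{p} f_p(c^*)$, this argument applies to at least the minimizing points, which is all we need). This assumption rules out alternative (i) of Lemma~\ref{lem:phi_maxima} --- that alternative would force $\Phi(c^*) = f_p(c^*) = |x_d(p)|$ for some $p$ --- so alternative (ii) of Lemma~\ref{lem:phi_maxima} must hold: there is some $0 \leq d' \leq d-1$ such that $c^*$ lies in a $d'$-dimensional face of $C$ and there are $d'+1$ distinct points $p_1, \ldots, p_{d'+1} \in P$ with $\Phi(c^*) = f_{p_1}(c^*) = \cdots = f_{p_{d'+1}}(c^*)$. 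Substituting $f_{p_i}(c^*) = \fproj_{p_i}(c^*)$ (justified above, since each $p_i$ realizes the lower envelope at $c^*$) yields $r^* = \fproj_{p_1}(c^*) = \cdots = \fproj_{p_{d'+1}}(c^*)$, which is case (ii) of the lemma.

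I do not expect a genuine obstacle here; the lemma is a routine transcription of the $d=3$ argument with the dimension left symbolic, and all the structural work has already been done in Lemma~\ref{lem:phi_maxima} and Lemma~\ref{lem:f_p}. The only point that deserves a careful sentence is the inference "$r^* \neq |x_d(p)|$ for all $p \in P$ implies $f_p(c^*) = \fproj_p(c^*)$ for all relevant $p$": one must note that $f_p(c^*) = |x_d(p)|$ would force $r^* = f_p(c^*) = |x_d(p)|$ whenever $f_p(c^*) = r^*$, contradicting the assumption, and that for points with $f_p(c^*) > r^*$ the value is irrelevant to identifying $c^*$ via Lemma~\ref{lem:phi_maxima}. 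With that observation in place, the equivalence of the two descriptions is immediate.
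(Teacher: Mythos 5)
Your proof is correct and follows the same route as the paper's: apply Lemma~\ref{lem:phi_maxima} at $c^*$, rule out its alternative (i) under the assumption $r^* \neq |x_d(p)|$ for all $p$, and then substitute $f_{p_i}(c^*) = \fproj_{p_i}(c^*)$ into the conclusion of its alternative (ii). In fact you are slightly more careful than the paper at one spot: the paper asserts "$f_p(c^*) = \fproj_p(c^*)$ for all $p\in P$," which does not literally follow from $r^* \neq |x_d(p)|$ alone, whereas you correctly restrict the substitution to the points $p_i$ that realize $\Phi(c^*) = r^*$, which is exactly what is needed.
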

\begin{proof}
The proof is almost identical to that of Lemma~\ref{lem:conf_3d}.
Since $c^*$ is a local maximum of $\Phi$ in $C$,
we can apply Lemma~\ref{lem:phi_maxima} for $c^*$.

Suppose that we are not in case (i) of the lemma, in which we have
$r^* = |x_d(p)|$ for some $p\in P$.
In other words, we suppose that $r^* \neq |x_d(p)|$ for all $p\in P$.
Note that this implies that $f_p(c^*) = \fproj_p(c^*)$ for all $p\in P$,
as discussed above.
This also excludes case (i) of Lemma~\ref{lem:phi_maxima},
so this should be case (ii) of Lemma~\ref{lem:phi_maxima}.
Specifically, it holds that
for $0\leq d' \leq d-1$, $c^*$ lies in a $d'$-face of $C$ and
there are $d'+1$ distinct points $p_1, \dots, p_{d'+1} \in P$ such that
 \[r^* = \Phi(c^*) = f_{p_1}(c^*) = \cdots = f_{d'+1}(c^*).\]
Since we have $f_p(c^*) = \fproj_p(c^*)$ for all $p\in P$,
this implies that
$r^* = \fproj_{p_1}(c^*) = \cdots = \fproj_{p_{d'+1}}(c^*)$,
as claimed.
\end{proof}

As done for $d=3$, our algorithm computes $r^* = \max_{c\in C} \Phi(c)$
and a corresponding center $c^* \in C$ such that $r^* = \Phi(c^*)$
by separately handling two cases (i) and (ii).
Each case is also handled similarly:
we define $r^*_1$ and $r^*_2$ analogously.
In particular, a point $c \in C$ is called a \emph{candidate center}
if $c$ lies in a $d'$-face of $C$ for $0\leq d' \leq d-1$
and there are $d'+1$ distinct points $p_1, \ldots, p_{d'+1} \in P$
such that $\Phi(c) = \fproj_{p_1}(c) = \cdots = \fproj_{p_{d'+1}}(c)$.

For our algorithm for $d>3$, an essential tool is again
the $L_\infty$ Voronoi diagram $\VD(\proj{P})$ in $d-1$ dimensional space $\Pi_0$.
The diagram $\VD(\proj{P})$ decomposes $\Pi_0$ into faces of dimension
$d' \in \{0, \ldots, d-1\}$ such that
each $d'$-face $F$ of $\VD(\proj{P})$ is the maximal set of points $c \in \Pi_0$
having a common set $N(F)$ of $d - d'$ nearest points in $\proj{P}$.
Fortunately, Boissonat et al.~\cite{bsty-vdhdcpdf-98} proved the following:
\begin{lemma}[Boissonat et al.~\cite{bsty-vdhdcpdf-98}] \label{lem:vd_highd}
 The $L_\infty$ Voronoi diagram of $n$ points in $d-1$ dimension
 has complexity $O(n^{\lfloor d/2 \rfloor}$
 and can be computed in $O(n^{\lfloor d/2 \rfloor} \log^{d-2} n)$ expected time.
\end{lemma}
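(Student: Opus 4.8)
This is the theorem of Boissonnat, Sharir, Tagansky, and Yvinec~\cite{bsty-vdhdcpdf-98}; we only indicate the structure of the argument. The starting point is that the $L_\infty$ unit ball in $\Real^{d-1}$ is the cube $[-1,1]^{d-1}$, a convex polytope with exactly $2(d-1)$ facets, a constant since $d$ is a constant. Hence the $L_\infty$ distance is a convex distance function defined by a polytope with $O(1)$ facets, and for $\proj{p}\in\proj{P}$ and $c\in\Pi_0$ one may write
\[ \fproj_p(c) = \|c-\proj{p}\| = \max_{1\le j\le 2(d-1)} \langle u_j,\, c-\proj{p}\rangle, \]
where $u_1,\ldots,u_{2(d-1)}$ are the vectors $\pm e_1,\ldots,\pm e_{d-1}$, the outward facet normals of the cube.

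The plan is then to lift the diagram by one dimension. In $\Pi_0\times\Real\cong\Real^d$, associate with each pair $(\proj{p},j)$ the hyperplane $h_{\proj{p},j}:=\{(c,z)\mid z=\langle u_j,\,c-\proj{p}\rangle\}$. The graph of the convex piecewise-linear function $\fproj_p$ is the upper envelope of the $2(d-1)$ hyperplanes $h_{\proj{p},1},\ldots,h_{\proj{p},2(d-1)}$; call this surface $\gamma_{\proj{p}}$. By construction, $\VD(\proj{P})$ is exactly the minimization diagram of the $n$ surfaces $\{\gamma_{\proj{p}}\mid\proj{p}\in\proj{P}\}$, i.e., the vertical projection onto $\Pi_0$ of their lower envelope in $\Real^d$. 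So it suffices to bound the combinatorial complexity of this lower envelope and to construct it efficiently.

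The main, and delicate, step is the complexity bound. A naive arrangement bound on the $O(n)$ piecewise-linear bisector surfaces gives only $O(n^{d-1})$; the point is that a \emph{lower envelope} of these structured ``lower-of-upper'' surfaces has complexity merely $O(n^{\lfloor d/2\rfloor})$. The argument of Boissonnat et al.\ carries out a further linearization that absorbs the ``$\max$ over $j$'' into the ambient coordinates, so that the faces of the envelope correspond to faces of a single convex polytope, expressed as an intersection of $O(n)$ halfspaces in dimension $d$; the Upper Bound Theorem then yields the $O(n^{\lfloor d/2\rfloor})$ bound (consistent with $\lfloor d/2\rfloor=\lceil(d-1)/2\rceil$ being the complexity exponent of a lower envelope of $O(n)$ hyperplanes in $\Real^d$). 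For the construction one runs a randomized incremental algorithm that inserts the sites of $\proj{P}$ in random order while maintaining the current diagram together with a conflict structure; a backwards analysis bounds the expected update cost and gives total expected time $O(n^{\lfloor d/2\rfloor}\log^{d-2}n)$, the extra polylogarithmic factor being the usual overhead of the incremental scheme. The sharp complexity estimate is the crux; the rest is standard. Full details are in~\cite{bsty-vdhdcpdf-98}.
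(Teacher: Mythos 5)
The paper gives no proof of this lemma; it is a black-box citation of Boissonnat, Sharir, Tagansky, and Yvinec. So there is no ``paper's own proof'' to compare with, and your proposal is really a reconstruction of the cited theorem's argument. Its outer shell is reasonable: the $L_\infty$ unit ball is a constant-complexity polytope, $\fproj_p(c)=\max_j\langle u_j,c-\proj{p}\rangle$, the graph of $\fproj_p$ is an upper envelope of $2(d-1)$ hyperplanes, and $\VD(\proj{P})$ is the minimization diagram of these surfaces, with the exponent sanity check $\lfloor d/2\rfloor=\lceil(d-1)/2\rceil$; likewise, randomized incremental construction is indeed how such diagrams are built in that line of work.

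However, the central technical step you describe is wrong. You claim Boissonnat et al.\ ``carry out a further linearization that absorbs the max over $j$ into the ambient coordinates, so that the faces of the envelope correspond to faces of a single convex polytope, expressed as an intersection of $O(n)$ halfspaces in dimension $d$,'' whence the Upper Bound Theorem gives $O(n^{\lfloor d/2\rfloor})$. That linearization does not exist and cannot: the epigraph of each $\fproj_p$ is a convex polyhedron (an intersection of $2(d-1)$ halfspaces), so the pointwise minimum is the boundary of the \emph{union} of $n$ convex polyhedra, which is genuinely non-convex. Unlike the Euclidean case --- where $\|c-p\|^2$ splits so that the $c$-dependent nonlinearity cancels, and the diagram lifts to a single convex polytope --- the $L_\infty$ diagram in $\Real^{d-1}$ is not the projection of one convex polytope in $\Real^d$, and the Upper Bound Theorem cannot be applied directly in the way you suggest. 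The actual proof of the $\Theta(n^{\lceil(d-1)/2\rceil})$ complexity bound for polyhedral convex distance Voronoi diagrams in Boissonnat et al.\ is a substantially more delicate combinatorial argument (a Clarkson--Shor style probabilistic counting of diagram features), and it is also the source of the companion bound on the union of congruent hypercubes that Lemma~\ref{lem:union_hypercubes} relies on. Since the lemma is cited rather than proved, this gap does not affect the paper, but if you intend your sketch as an explanation of the reference, this step is the one that needs to be replaced.
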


We again handle each case separately.
\subsection{Case (i)}
We again consider the decision problem, and solve it
by testing $C \subseteq U(w)$.
The only difference is that $U(w)$ is now the union of
$(d-1)$-dimensional hypercubes $\Ball_0(p, w)$ for $p\in P$ in $\Pi_0$.

It is known by Boissonat et al.~\cite{bsty-vdhdcpdf-98} that the union of $n$ hypercubes of equal radius in $d-1$ dimension
has complexity $O(n^{\lfloor (d-1)/2\rfloor})$ for $d \geq 3$.
We can compute the union $U(w)$ of hypercubes in $\Pi_0$ by using
the $(d-1)$-dimensional $L_\infty$ Voronoi diagram.
\begin{lemma}\label{lem:union_hypercubes}
 Let $S$ be a set of $m$ hypercubes of equal radius in $d-1$ dimensional space.
 Then, their union can be computed in
 $O(m^{\lfloor d/2 \rfloor} \log^{d-2} m)$ expected time.
\end{lemma}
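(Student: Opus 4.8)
The plan is to compute the union $U(w) = \bigcup_{q \in S} \intr q$ of the $m$ equal-radius hypercubes by reducing the problem to a $(d-1)$-dimensional $L_\infty$ nearest-point Voronoi diagram computation, exactly mirroring the strategy used for the union of squares in the $d=3$ case. First I would observe that a point $c \in \Pi_0$ lies in $U(w)$ if and only if its $L_\infty$ distance to the \emph{nearest} center among the $m$ given hypercube centers is strictly less than $w$; equivalently, $c$ lies in the interior of the hypercube of radius $w$ about its nearest center. Thus $U(w)$ is precisely the sublevel set $\{c : \nu(c) < w\}$ of the function $\nu(c) := \min_{q \in S} (L_\infty\text{-distance from } c \text{ to the center of } q)$, which is the lower envelope of $m$ translated $L_\infty$-cones, and whose combinatorial structure is refined by $\VD$ of the $m$ centers.

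The key steps, in order, are: (1) extract the $m$ centers of the hypercubes in $S$ and invoke Lemma~\ref{lem:vd_highd} to build their $L_\infty$ Voronoi diagram in the $(d-1)$-dimensional space, in $O(m^{\lfloor d/2 \rfloor} \log^{d-2} m)$ expected time, obtaining a cell complex of total complexity $O(m^{\lfloor d/2 \rfloor})$; (2) within each Voronoi cell $\mathrm{Vor}(q)$, the boundary of $U(w)$ is simply $\mathrm{Vor}(q) \cap \bd q'$, where $q'$ is the radius-$w$ hypercube concentric with $q$ — so clip each hypercube boundary to its own cell, which is a local and cell-by-cell operation; (3) glue these clipped pieces along shared Voronoi faces to assemble $\bd U(w)$, and hence a description of $U(w)$ itself. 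Since each cell of $\VD$ is a convex polytope of constant combinatorial complexity in fixed dimension and the clipping of one axis-aligned hypercube boundary against it costs time proportional to the cell's complexity, the total work for steps (2) and (3) is linear in the size of $\VD$, i.e. $O(m^{\lfloor d/2 \rfloor})$, which is dominated by the construction cost in step (1). (The bound $O(m^{\lfloor (d-1)/2 \rfloor})$ on the complexity of the union itself, due to Boissonnat et al., confirms the output is not a bottleneck.)

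The main obstacle I anticipate is making step (2) genuinely clean in all dimensions: one must argue carefully that within $\mathrm{Vor}(q)$ the union $U(w)$ restricted to that cell is exactly $q' \cap \mathrm{Vor}(q)$ — i.e. that no \emph{other} hypercube $q'' \neq q$ can contribute a point of $U(w)$ inside $\mathrm{Vor}(q)$ that is not already covered by $q'$. This follows because for $c \in \mathrm{Vor}(q)$ the $L_\infty$ distance from $c$ to $q$'s center is the minimum over all centers, so $c \in \intr q''$ (radius $w$) would force $c \in \intr q'$ as well; hence the restriction to a cell is determined solely by the cell's own site. A secondary subtlety is bookkeeping at the lower-dimensional faces of $\VD$ (edges, $2$-faces, etc.), where several sites are equidistant and the corresponding clipped hypercube facets must be matched up consistently — but this is handled automatically by traversing the Voronoi complex face by face, and the constant combinatorial complexity of each face in fixed dimension keeps the per-face cost $O(1)$. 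With these points settled the running time bound $O(m^{\lfloor d/2 \rfloor} \log^{d-2} m)$ is immediate, completing the proof of the lemma.
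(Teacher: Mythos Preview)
Your approach is essentially the same as the paper's: build the $(d-1)$-dimensional $L_\infty$ Voronoi diagram of the $m$ centers via Lemma~\ref{lem:vd_highd}, then restrict the union to each Voronoi face by intersecting that face with the radius-$w$ hypercube(s) of its nearest site(s). The paper writes this intersection as $U(F) = F \cap \bigcap_{p\in N(F)} \Ball_0(p,w)$ over all faces $F$, which is exactly your per-cell clipping together with your observation that inside $\mathrm{Vor}(q)$ only the hypercube of $q$ itself matters.

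One small correction: your claim that ``each cell of $\VD$ is a convex polytope of constant combinatorial complexity in fixed dimension'' is false --- an $L_\infty$ Voronoi cell can have up to $\Theta(m)$ facets even in the plane. Fortunately you do not actually need this; your charging already (correctly) bounds the total clipping work by the \emph{total} complexity of $\VD$, which is $O(m^{\lfloor d/2\rfloor})$, and that is all the argument requires. Just drop the per-cell constant-complexity claim and the proof goes through unchanged.
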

\begin{proof}
Let $w$ be the radius of hypercubes in $S$, and $P'$ be the set of centers
of hypercubes in $S$.
Let $U$ be their union $\bigcup_{B \in S} B$.
We compute $U$ using the $L_\infty$ Voronoi diagram $\VD(P')$.
Note that,
for each point $c\in U$, it holds that
$\min_{p \in P'} \|c - p\| < w$,
since each $B\in S$ is $\Ball(p, w)$ for some $p \in P'$.

We first compute the $L_\infty$ Voronoi diagram $\VD(P')$.
This takes $O(m^{\lfloor d/2 \rfloor} \log^{d-2} m)$ expected time
by Lemma~\ref{lem:vd_highd}.
Then, for each face $F$ of $\VD(P')$, we compute the set $U(F)$
of points $c \in F$ such that $\| c - p\| < w$ for every $p \in N(F)$.
Note that the set $N(F)$ of common nearest points for face $F$ consists of
$d-d'$ points, if $F$ is a $d'$-face, and $U(F)$ is just the intersection
\[ U(F) = F \cap \bigcap_{p\in N(F)} \Ball_0(p, w)\]
of $d-d'$ hypercubes and the face $F$.
Since the complexity of $\VD(P')$ is $O(m^{\lfloor d/2 \rfloor})$,
this iteration is done in time $O(m^{\lfloor d/2 \rfloor})$.
Since the faces of $\VD(P')$ form a (disjoint) decomposition of the space,
we have $U = \bigcup_F U(F)$.
Thus, we can compute the union $U$ in the claimed time.
\end{proof}

After specifying $\Ball_0(p, w)$ for each $p\in P$,
we collect at most $n$ hypercubes of $d-1$ dimension and compute their union
$U(w)$ by the algorithm of Lemma~\ref{lem:union_hypercubes}.
Then, we intersect $U(w)$ by $C$.
Since the complexity of $U(w)$ is bounded by $O(n^{\lfloor (d-1)/2\rfloor})$,
this can be also done in the same time bound.
If there is a point $c \in C$ such that $c \notin U(w)$,
then we have $\Phi(c) \geq w$ and thus the hypercubic shell $A^*(c)$
centered at $c$ has width at most $w$,
so we report that there exists a hypercubic shell of width $w$
enclosing $P$.
Otherwise, if $C \subseteq U(w)$, then there is no such shell.

Thus, we conclude the following.
\begin{lemma} \label{lemma:decision_highd}
 Let $d \geq 4$ be a constant.
 Given a set $P$ of $n$ points in $\Real^d$ and a real $w \geq 0$,
 we can decide if there exists a hypercubic shell enclosing $P$ of width $w$
 in $O(n^{\lfloor d/2 \rfloor} \log^{d-2} n)$ expected time.
 If exists, such a hypercubic shell can be output in the same time bound.
\end{lemma}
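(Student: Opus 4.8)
The plan is to mimic the three-dimensional decision procedure of Lemma~\ref{lemma:decision_3d}: turn the question into a covering test and then charge its cost to the union-of-hypercubes routine of Lemma~\ref{lem:union_hypercubes}. By Lemma~\ref{lem:C_centered} it suffices to search for a qualifying shell whose center lies in $C$, and for any center $c\in C$ the outer hypercube of the thinnest enclosing shell is forced to have radius $h/2$. Hence deciding whether such a shell exists is equivalent to deciding whether $\Phi(c)\ge w$ for some $c\in C$ — that is, whether the open sublevel set $U(w)=\{c\in\Pi_0\mid\Phi(c)<w\}$ fails to cover $C$. Exactly as in the case $d=3$, $\Phi(c)<w$ holds iff $\|c-p\|<w$ for some $p\in P$, so $U(w)=\bigcup_{p\in P}\intr\Ball_0(p,w)$, where $\Ball_0(p,w)=\Ball(p,w)\cap\Pi_0$ is empty when $|x_d(p)|\ge w$ and is otherwise a $(d-1)$-dimensional hypercube of radius $w$ in $\Pi_0$. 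Thus $U(w)$ is a union of at most $n$ equal hypercubes in $\Pi_0\cong\Real^{d-1}$.

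First I would collect the (at most $n$) nonempty balls $\Ball_0(p,w)$ and run the algorithm of Lemma~\ref{lem:union_hypercubes} on them, obtaining a cell-complex description of $U(w)$ in $O(n^{\lfloor d/2\rfloor}\log^{d-2}n)$ expected time; recall that this union has combinatorial complexity only $O(n^{\lfloor (d-1)/2\rfloor})$. Next I would test whether the single axis-aligned box $C$ is contained in $U(w)$ and, if not, extract a witness point $c\in C\setminus U(w)$. Since $C$ is one box while $U(w)$ has only $O(n^{\lfloor (d-1)/2\rfloor})$ cells, this step fits in $O(n^{\lfloor d/2\rfloor})$ time: one can clip the cell complex of $U(w)$ to $C$ and check whether the clipped pieces exhaust $C$, or, more simply, fold the intersection with $C$ into the per-face computation $U(F)=F\cap\bigcap_{p\in N(F)}\Ball_0(p,w)$ used inside the proof of Lemma~\ref{lem:union_hypercubes}. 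From a witness $c$ we have $\Phi(c)=\min_{p\in P}\|c-p\|\ge w$, so the shell $A^*(c)$ with outer hypercube $\Ball(c,h/2)$ and inner hypercube $\Ball(c,\Phi(c))$ encloses $P$, and this shell (shrinking its inner hypercube as needed) is the certificate we output. If no witness exists, then $C\subseteq U(w)$ and no qualifying shell exists; adding up the two stages gives the claimed $O(n^{\lfloor d/2\rfloor}\log^{d-2}n)$ expected running time.

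The main obstacle has already been isolated: computing the union of $n$ equal hypercubes in $d-1$ dimensions within the target time, which is exactly Lemma~\ref{lem:union_hypercubes} and ultimately rests on the $L_\infty$ Voronoi-diagram bound of Lemma~\ref{lem:vd_highd}. Everything else here — forming the balls, the covering test, witness extraction, and reading off the implied shell — is routine and of strictly lower order; the only point that warrants a word of care is keeping the covering test against $C$ from dominating, which is immediate because $C$ is a single axis-aligned box.
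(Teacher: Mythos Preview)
Your proposal is correct and follows essentially the same route as the paper: reduce the decision question to the covering test $C\subseteq U(w)$, identify $U(w)$ with a union of at most $n$ equal-radius $(d-1)$-dimensional hypercubes in $\Pi_0$, build that union via Lemma~\ref{lem:union_hypercubes}, and then clip against the single box $C$ to decide and, if applicable, to extract a witness center. The only wrinkle---shared with the paper---is the slight conflation of ``width $w$'' with the inner-radius threshold in the inequality $\Phi(c)\ge w$; strictly, width $\le w$ corresponds to $\Phi(c)\ge h/2-w$, but this is a harmless relabeling of the parameter and does not affect the algorithm or its analysis.
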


After sorting $\{ |x_d(p)|  \mid p \in P\}$ in $O(n \log n)$ time,
we can find the biggest value $r^*_1$ for which the above decision algorithm
returns ``yes'' in $O(n^{\lfloor d/2 \rfloor} \log^{d-1} n)$ time
by a binary search.
Such a point $c^*_1 \in C$ that $r^*_1 = \Phi(c^*_1)$ can also be found
in the same time bound.

\subsection{Case (ii)}
In order to compute $r^*_2$ and $c^*_2$ for $d > 3$,
we show an analogous lemma of Lemma~\ref{lem:candidate_center_3d}.
\begin{lemma} \label{lem:candidate_center_highd}
 Let $c \in Q$ be a candidate center.
 Then, $c$ is an intersection of a $d'$-face of $C$ and
 and a $(d-d'-1)$-face of $\VD(\proj{P})$ for some $0\leq d' \leq d-1$.
\end{lemma}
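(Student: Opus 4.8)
The plan is to follow the proof of Lemma~\ref{lem:candidate_center_3d} almost verbatim, only replacing the explicit enumeration of low-dimensional cases by a single dimension count. Let $c\in Q$ be a candidate center. By definition $c$ lies in a $d'$-face $F$ of $C$ for some $0\le d'\le d-1$, and there are $d'+1$ distinct points $p_1,\dots,p_{d'+1}\in P$ with $\Phi(c)=\fproj_{p_1}(c)=\cdots=\fproj_{p_{d'+1}}(c)$. The $C$-part of the conclusion is then immediate, since $c$ already lies in the $d'$-face $F$ of $C$; the whole task reduces to showing that $c$ lies on a $(d-d'-1)$-face of $\VD(\proj{P})$, for then, combining with $c\in F$, the point $c$ is an intersection of a $d'$-face of $C$ and a $(d-d'-1)$-face of $\VD(\proj{P})$, as claimed.

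To show this, I would first recall the combinatorial description of $\VD(\proj{P})$ used just before Lemma~\ref{lem:vd_highd}: a $k$-face of $\VD(\proj{P})$ is the maximal set of points of $\Pi_0$ sharing a common set of $d-k$ nearest sites of $\proj{P}$ under the $L_\infty$ distance; in particular a $(d-d'-1)$-face carries exactly $d'+1$ co-nearest sites, which is precisely the number of witnesses we are given. Hence it suffices to certify that $\proj{p_1},\dots,\proj{p_{d'+1}}$ are co-nearest sites of $c$, i.e. that $\fproj_{p_i}(c)=\min_{p\in P}\fproj_p(c)$ for each $i$. For this I would invoke the $\fproj$-versus-$f$ reduction used in Lemma~\ref{lem:conf_highd}: at a candidate center $c$ the functions $f_p$ and $\fproj_p$ agree at $c$ along the lower envelope, so $\min_{p\in P}\fproj_p(c)=\min_{p\in P}f_p(c)=\Phi(c)=\fproj_{p_i}(c)$; therefore every $\proj{p_i}$ realizes the minimum $L_\infty$ distance to $c$. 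As these are $d'+1$ distinct co-nearest sites, $c$ lies on a $(d-d'-1)$-face of $\VD(\proj{P})$, completing the argument.

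I expect the main technical point to be that middle step: relating the $\fproj$-equidistance of the $d'+1$ witnesses to membership in a Voronoi face of the correct dimension, i.e. verifying that equidistance at the envelope value $\Phi(c)$ forces those sites to be the $L_\infty$-nearest sites of $c$. This is exactly where the hypothesis that $c$ is a candidate center (as opposed to a point handled under case (i) of Lemma~\ref{lem:conf_highd}) enters, via the identity $f_p(c)=\fproj_p(c)$ along the lower envelope. Everything else is bookkeeping, namely matching the Voronoi face dimension $d-d'-1$ to the face dimension $d'$ of $C$ through $\dim\Pi_0=d-1$ --- the higher-dimensional counterpart of the short three-way case check (vertex of $\VD(\proj{P})$, edge--edge intersection, vertex of $C$) in the proof of Lemma~\ref{lem:candidate_center_3d}.
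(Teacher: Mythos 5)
Your proof takes essentially the same approach as the paper's: begin from the candidate-center definition, argue that the $d'+1$ witness points are $L_\infty$-nearest sites in $\proj{P}$, and conclude by the dimension count $d-(d'+1)=d-d'-1$ for faces of $\VD(\proj{P})$. The only cosmetic difference is in how the nearest-site step is phrased — the paper asserts the inequality $\Phi(c)\leq\min_{p\in P}\fproj_p(c)$, while you assert the identity $\min_{p\in P}\fproj_p(c)=\Phi(c)$ via the $f$-versus-$\fproj$ agreement at candidate centers; these are equivalent once combined with $\fproj_{p_i}(c)=\Phi(c)$, and neither is argued in substantially more detail than the other.
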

\begin{proof}
By definition, any candidate center $c$ is a point on a $d'$-face of $C$
for $0 \leq d' \leq d-1$
such that there are $d'+1$ distinct points $p_1, \ldots, p_{d'+1} \in P$
such that $\Phi(c) = \fproj_{p_1}(c) = \cdots = \fproj_{p_{d'+1}}(c)$.
Since $\Phi(c) \leq \min_{p\in P} \fproj_p(c)$,
those $d'+1$ points are all nearest points in $\proj{P}$ from $c$.
From the property of the Voronoi diagram $\VD(\proj{P})$,
this implies that $c$ lies in a face of $\VD(\proj{P})$
of dimension $d-(d'+1) = d-d'-1$.
Hence, $c$ is an intersection point of a $d'$-face of $C$ and
 and a $(d-d'-1)$-face of $\VD(\proj{P})$ for some $0\leq d' \leq d-1$.
\end{proof}

Our algorithm thus computes $\VD(\proj{P})$ and intersects it with $C$.
Initially, we let $Q$ be the set of all intersection points
between a $d'$-face of $C$ and a $(d-d'-1)$-face of $\VD(\proj{P})$
for $0\leq d' \leq d-1$.
Since $\VD(\proj{P})$ consists of $O(n^{\lfloor d/2 \rfloor})$ faces
(Lemma~\ref{lem:vd_highd}), $Q$ consists of at most $O(n^{\lfloor d/2 \rfloor})$
points.
By Lemma~\ref{lem:candidate_center_highd}, it is guaranteed that
$Q$ contains all candidate centers.
We then test each $c \in Q$ if $f_p(c) = \fproj_p(c)$
for all nearest points $\proj{p} \in \proj{P}$ from $c$.
This can be done in $O(d) = O(1)$ time
by storing the face $F$ of $\VD(\proj{P})$ that contains $c$
and the set $N(F)$ of its nearest points.
If the test is passed, $c$ is a candidate center;
otherwise, we discard $c$ and remove $c$ from $Q$.
Now, $Q$ consists of only candidate centers.
Note that if $c$ is a candidate center and $F$ is the face of $\VD(\proj{P})$
such that $c\in F$, it holds that
$\Phi(c) = f_p(c) = \fproj_p(c)$ for each $p\in N(F)$.
Thus, we can find $c^*_2$ and $r^*_2$ simply taking the maximum
$r^*_2 = \max_{c \in Q} \Phi(c) = \Phi(c^*_2)$.
The time consumed in this process is bounded by
$O(n^{\lfloor d/2\rfloor} \log^{d-2} n)$ expected time
for computing $\VD(\proj{P})$ by Lemma~\ref{lem:vd_highd}.

Finally, we conclude the following.
\begin{theorem} \label{thm:hypercubic_shell}
 Let $d\geq 4$ be a constant integer and $P$ be a set of $n$ points in $\Real^d$.
 Then, a minimum-width hypercubic shell enclosing $P$ can be computed in
 $O(n^{\lfloor d/2 \rfloor} \log^{d-1} n)$ expected time.
\end{theorem}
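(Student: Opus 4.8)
The plan is to assemble the theorem from the two cases already set up in the preceding discussion, exactly mirroring the $d=3$ argument of Theorem~\ref{thm:cubic_shell}. By Lemma~\ref{lem:C_centered} there is an optimal solution centered in $C$, and for a center $c \in C$ the outer hypercube has fixed radius $h/2$, so minimizing width amounts to computing $r^* = \max_{c\in C}\Phi(c)$. By Lemma~\ref{lem:conf_highd} the maximizer $c^*$ falls into case (i), where $r^* = |x_d(p)|$ for some $p\in P$, or case (ii), where $c^*$ is a candidate center; hence $r^* = \max\{r^*_1, r^*_2\}$. So it suffices to bound the time to compute $r^*_1$ and $r^*_2$ separately, and the overall running time is the maximum of the two.

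For case (i), I would invoke the decision procedure of Lemma~\ref{lemma:decision_highd}: deciding whether a hypercubic shell of width $w$ centered in $C$ exists reduces to testing $C \subseteq U(w)$, where $U(w) = \bigcup_{p\in P}\intr\Ball_0(p,w)$ is a union of at most $n$ equal-radius hypercubes in the $(d-1)$-dimensional space $\Pi_0$; this union is computed via Lemma~\ref{lem:union_hypercubes} in $O(n^{\lfloor d/2\rfloor}\log^{d-2}n)$ expected time, and its complexity is $O(n^{\lfloor(d-1)/2\rfloor})$, so intersecting with $C$ and checking coverage stays within the same bound. Since the only $n$ candidate values for $r^*_1$ are $\{|x_d(p)| \mid p\in P\}$, I would sort these in $O(n\log n)$ time and binary-search for the largest one returning ``yes,'' paying an extra $\log n$ factor. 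This yields $r^*_1$ (and the witnessing center $c^*_1\in C$) in $O(n^{\lfloor d/2\rfloor}\log^{d-1}n)$ expected time.

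For case (ii), I would appeal directly to the construction already described: compute the $L_\infty$ Voronoi diagram $\VD(\proj{P})$ of the projected points in $\Pi_0\cong\Real^{d-1}$, which by Lemma~\ref{lem:vd_highd} has $O(n^{\lfloor d/2\rfloor})$ faces and is computed in $O(n^{\lfloor d/2\rfloor}\log^{d-2}n)$ expected time. By Lemma~\ref{lem:candidate_center_highd} every candidate center lies at the intersection of a $d'$-face of $C$ and a $(d{-}d'{-}1)$-face of $\VD(\proj{P})$, so initializing $Q$ with all such intersections captures all candidate centers; there are $O(n^{\lfloor d/2\rfloor})$ of them since $C$ has constantly many faces. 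For each $c\in Q$ I test in $O(d)=O(1)$ time whether $f_p(c)=\fproj_p(c)$ for every nearest point $\proj p\in N(F)$ of the containing face $F$ — keeping $c$ exactly when it is a genuine candidate center, at which point $\Phi(c)=f_p(c)=\fproj_p(c)$ — and take $r^*_2 = \max_{c\in Q}\Phi(c)$. This costs $O(n^{\lfloor d/2\rfloor}\log^{d-2}n)$ expected time, dominated by the diagram construction. Combining the two cases, $r^* = \max\{r^*_1,r^*_2\}$ and the corresponding center and radii $h/2$ and $r^*$ give the minimum-width hypercubic shell, all within $O(n^{\lfloor d/2\rfloor}\log^{d-1}n)$ expected time. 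The only mild subtlety — not really an obstacle, given the cited results — is ensuring that intersecting the union $U(w)$ or the diagram $\VD(\proj P)$ with the constant-complexity box $C$ does not inflate the complexity beyond $O(n^{\lfloor d/2\rfloor})$ and can be performed within the stated time, which follows because $C$ has a constant number of faces of each dimension.
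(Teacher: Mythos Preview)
Your proposal is correct and follows essentially the same approach as the paper: split into cases (i) and (ii) via Lemma~\ref{lem:conf_highd}, handle case~(i) by binary search over the $n$ values $|x_d(p)|$ using the decision procedure of Lemma~\ref{lemma:decision_highd}, handle case~(ii) by enumerating candidate centers from $\VD(\proj{P})\cap C$ via Lemmas~\ref{lem:vd_highd} and~\ref{lem:candidate_center_highd}, and return $r^*=\max\{r^*_1,r^*_2\}$. The time analysis and the observation that $C$ has constant complexity so that intersections do not inflate the bounds match the paper's reasoning exactly.
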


\section{Concluding Remarks} \label{sec:conclusions}

We addressed the minimum-width cubic and hypercubic shell problem
in high dimension, generalizing the square annulus problem.
Our algorithm runs in $O(n \log^2 n)$ worst-case time for the cubic shell
and $O(n^{\lfloor d/2 \rfloor} \log^{d-1} n)$ expected time
for the hypercubic shell in $\Real^d$ for $d \geq 4$.
It would be worth mentioning that the currently best time bound
$O(n^{\lfloor d/2 \rfloor} \log^{d-1} n)$ holds for any $d \geq 2$.
Theorems~\ref{thm:cubic_shell} and~\ref{thm:hypercubic_shell},
together with the result in~\cite{ght-oafepramw-09},
are summarized into the following corollary.
\begin{corollary}
 Let $d \geq 2$ be any constant integer, and $P$ be a set of $n$ points
 in $\Real^d$.
 Then, a minimum-width hypercubic shell enclosing $P$ can be computed
 in $O(n^{\lfloor d/2 \rfloor} \log^{d-1} n)$ time.
\end{corollary}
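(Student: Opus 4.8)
The plan is simply to verify that the unified bound $O(n^{\lfloor d/2\rfloor}\log^{d-1}n)$ is, for each constant $d\geq 2$, exactly the running time delivered by one of the three results already in hand, so that the corollary follows by a short case analysis on $d$.

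First I would handle the two small cases directly. For $d=2$ we have $\lfloor d/2\rfloor=1$ and $d-1=1$, so the claimed bound reads $O(n\log n)$; this is precisely the running time of the minimum-width square annulus algorithm of Gluchshenko et al.~\cite{ght-oafepramw-09} recorded in Theorem~\ref{thm:1d-2d}. For $d=3$ we still have $\lfloor d/2\rfloor=\lfloor 3/2\rfloor=1$, but now $d-1=2$, so the claimed bound reads $O(n\log^2 n)$, which is exactly the bound of Theorem~\ref{thm:cubic_shell}. For every constant $d\geq 4$, Theorem~\ref{thm:hypercubic_shell} already states a running time of $O(n^{\lfloor d/2\rfloor}\log^{d-1}n)$, matching the corollary verbatim. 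Taking the three cases together proves the statement, with the (standard) caveat that the hidden constant depends on the constant $d$, and that for $d\geq 4$ the bound is an expected-time bound inherited from the randomized construction of the $(d-1)$-dimensional $L_\infty$ Voronoi diagram in Lemma~\ref{lem:vd_highd}, whereas for $d\in\{2,3\}$ it holds in the worst case.

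There is essentially no obstacle here beyond bookkeeping; the one point worth a sentence is that the exponent and the log-power behave consistently across the case boundaries, namely the step $d=2\to d=3$ freezes the exponent at $1$ while raising the log-power from $1$ to $2$, and the step $d=3\to d=4$ raises the exponent from $1$ to $2$ while raising the log-power from $2$ to $3$, so in each regime the single formula $O(n^{\lfloor d/2\rfloor}\log^{d-1}n)$ is the tight bound coming from the relevant theorem.
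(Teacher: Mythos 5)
Your proposal is correct and is essentially the paper's own argument: the corollary is stated without a formal proof, accompanied only by the remark that Theorems~\ref{thm:cubic_shell} and~\ref{thm:hypercubic_shell} together with the square-annulus result of Gluchshenko et al.~\cite{ght-oafepramw-09} "are summarized into" the unified bound, which is precisely the three-way case split ($d=2$, $d=3$, $d\geq 4$) you carry out. Your explicit note that the $d\geq 4$ bound is an expected-time bound while $d\in\{2,3\}$ are worst-case is a precision the paper glosses over in the corollary's wording, and is worth keeping.
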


There are several open questions.
In particular for $d=3$, our algorithm runs in $O(n \log^2 n)$ time.
Is it possible to reduce the time bound to $O(n \log n)$?
As Gluckshenko et al.~\cite{ght-oafepramw-09} proved a lower bound
of $\Omega(n \log n)$ for $d=2$, the same lower bound applies to
the case of $d \geq 3$.

A bottleneck of our algorithm for $d=3$ is the decision algorithm
that takes $O(n \log n)$ time and the binary search using it.
One could try to apply the parametric search technique,
while it seems nontrivial to devise a proper parallel algorithm.

Another interesting question would be about the lower envelope of
functions $f_p$.
What is the correct complexity of the lower envelope $\Phi$ of functions $f_p$?
We tried to obtain a nontrivial upper bound, i.e., a subquadratic bound for $d=3$,
on the complexity of $\Phi$, but failed.
Note that the corresponding minimization diagram on $\Pi_0$ coincides
the intersection of a $d$-dimensional $L_\infty$ Voronoi diagram
by an axis-aligned hyperplane $\Pi_0$.



%

{
\bibliographystyle{abbrv}
\bibliography{annuli}
}

\end{document}